\theoremstyle{thmstyleone}%
\newtheorem{theorem}{Theorem}
\theoremstyle{thmstyletwo}%
\newtheorem{example}{Example}%
\newtheorem{remark}{Remark}%
\theoremstyle{thmstylethree}%
\newtheorem{definition}{Definition}%
\newcounter{cases}
\newcounter{subcases}[cases]
\newenvironment{mycases}
  {%
    \setcounter{cases}{0}%
    \setcounter{subcases}{0}%
    \def\case
      {%
        \par\noindent
        \refstepcounter{cases}%
        \textbf{Case \thecases.}
      }%
    \def\subcase
      {%
        \par\noindent
        \refstepcounter{subcases}%
        \textit{Subcase (\thesubcases):}
      }%
  }
  {%
    \par
  }
\renewcommand*\thecases{\arabic{cases}}
\renewcommand*\thesubcases{\roman{subcases}}
\begin{document}

\title[A Direct Construction of 2D-CCC]{A Direct Construction of 2D-CCC with Arbitrary Array Size and Flexible Set Size Using Multivariable Function}


\author*[1]{\fnm{Gobinda} \sur{Ghosh}}\email{sagarghosh798@gmail.com}

\author[2]{\fnm{Sachin} \sur{Pathak}}\email{sachiniitk93@gmail.com}

\affil*[1]{\orgdiv{School of Technology}, \orgname{Woxsen University}, \orgaddress{ \city{Hyderabad}, \postcode{502345}, \state{Telangana}, \country{India}}}

\affil[2]{\orgdiv{Department of Mathematics and Basic Sciences}, \orgname{NIIT University}, \orgaddress{\city{Neemrana}, \postcode{301705}, \state{Rajasthan}, \country{India}}}



\abstract{In this paper, we propose 
 direct construction of two-dimensional complete complementary codes (2D-CCCs) with arbitrary array size and flexible set size using multivariable functions (MVF). We investigate row and column sequences peak to mean envelope ratio (PMEPR) of the proposed construction 2D-CCCs. We have demonstrated that in the case of 2D-CCCs with array and set sizes as power-of-two, the maximum PMEPR for both rows and columns is upper-bounded by a value of two. The proposed construction generalizes many of the existing state-of-the-art sequence designs such as Golay complementary pair (GCP), Golay complementary set (GCS), one-dimensional (1D)-CCC, 2D Golay complementary array set (2D-GCAS), and 2D-CCC. }

\keywords{Two-dimensional complete
complementary codes (2D-CCC), multivariable function (MVF), two dimensional Z-
complementary array code set (2D-ZCACS).}



\maketitle

\section{Introduction}\label{sec1}
{A}~{pair} of sequences is called  a  Golay
complementary pair (GCP) if the aperiodic
auto-correlation sum (AACS) is zero for any non-zero
time-shift \cite{golay1949multi}. Because of its ideal auto-correlation property, it is used in wireless communication engineering for various purposes \cite{popovic1991synthesis,davis1999peak,wang2014pmepr,spasojevic2001complementary,pezeshki2008doppler}. For example, GCP has been used to reduce the peak-to-mean envelope power ratio (PMEPR) in orthogonal frequency division multiplexing (OFDM)
\cite{popovic1991synthesis,davis1999peak,wang2014pmepr}, to estimate channel \cite{spasojevic2001complementary}, and also use for radar communication \cite{pezeshki2008doppler}. In \cite{tseng1972complementary}, Tseng and Liu generalized the concept of GCPs by introducing Golay complementary sets (GCSs) and mutually
orthogonal Golay complementary sets (MOGCS). GCS refers to a set of sequences of the same lengths having zero AACS for all non-zero time shifts. If the set size of GCS is $2$, it is called GCP.
The PMEPR of  GCS-based orthogonal frequency-division multiplexing (OFDM) is limited by the set size of GCS \cite{davis1999peak,schmidt2007complementary,chen2006complementary,paterson2000generalized}. 
On the other hand, MOGCS is a collection of GCSs in which each GCS is orthogonal to each other in terms of their zero cross-correlation sums for all time shifts.
When the set size of MOGCS reaches its upper bound, it is called 
complete complementary code (CCC), first introduced by Suehiro and Hatori in  \cite{suehiro1988n}. 
There are many constructions of CCCs \cite{rathinakumar2008complete,das2019new,das2018novel,sarkar2021multivariable,das2017multiplier,de2007modular} that have a wide range of applications in wireless communications \cite{chen2001multicarrier,liu2014fractional}, information hiding \cite{kojima2013audio,kojima2014disaster}, etc.\\
\text{~~}The sequences discussed above are one-dimensional (1D) sequences. Recently, Liu \textit{et al}.  constructed two-dimensional (2D)-Golay complementary array
sets (GCAS) and demonstrated their applications in massive multiple input multiple output (MIMO) systems equipped with a uniform rectangular array (URA)  \cite{liu2022constructions}. There are many constructions of GCAS available in \cite{liu2022constructions,pai2022two,wang2021new,wang2020constructions}.  
 The authors in \cite{1267872} implemented two-dimensional mutually orthogonal complementary array set (2D-MOCASs) in an ultra-wideband (UWB) multi-carrier code-division multiple access (MC-CDMA) system to deliver a low  PMEPR value in addition to good bandwidth efficient transmission. A 2D-MOCAS becomes a 2D-CCC when the number of component arrays is equal to the set size \cite{wang2021new}. One of the most important uses of 2D-CCC and 2D-GCAS are in omnidirectional precoding (OP) based massive (MIMO) systems \cite{meng2015omnidirectional,liu2022constructions,pai2022designing,zhao2023two,lu2020omnidirectional}. 
Other than this, 2D-CCC has other applications such as 
 multiple access interference (MAI)-free transmission \cite{turcsany2006performance}, 
 image change and motion detection   \cite{el2004new}, 
 and application in 2D-MC-CDMA system\cite{farkas2003two}.
\par even though 2D-CCC has several essential applications, their direct constructions, which do not depend on initial sequences and matrices, are still not available in the literature. Most of the 2D-CCCs are constructed using different 1D or 2D sequences with certain operations
\cite{davidekova2013generalized,turcsany2003two,das2020two}. In \cite{davidekova2013generalized}, 2D-CCC with array size $2^{m}M \times 2^{m}M$ and set size $M^{2}$ is constructed for $m\geq 1$ and $M\geq 2$. In \cite{farkas2003two}, 2D-CCC with array size $M^{2}\times K^{2}$ and set size $MK$ is constructed for $M,K\geq 2$. In \cite{das2020two}, 2D-CCC with array size $M\times M$ and set size $M$ is constructed for $M\geq 2$. All the above constructions depend heavily on some choices of initial sequences and matrices.
 Pai and Chen in  \cite{pai2022designing} used GBF to construct 2D-CCC of set size $2^{k}$ and  array size $2^{n}\times2^{m}$ where $m,n,k\geq 1$ and $m,n\geq k$. To the best of our knowledge, there is no direct construction of 2D-CCC with all possible array sizes and flexible set sizes available in the existing literature.
\par The limitations in array and set sizes of 2D-CCC found in direct constructions within existing literature motivate us to explore the use of multivariable functions (MVF) to achieve greater flexibility in both array and set dimensions. Motivated by the scarcity of 2D-CCC with flexible parameters, in this study, we 
construct  2D-CCC of all possible array sizes of the form $m\times n$ where $m=\prod_{i=1}^{a}p_{i}^{m_{i}}$ $n=\prod_{j=1}^{b}q_{j}^{n_{j}}$, and flexible set size of the form $\prod_{i=1}^{a}p^{k_{i}}_{i}\prod_{j=1}^{b}q^{l_{j}}_{j}$ where $p_{i},q_{j}$ are prime numbers, and $m_{i},n_{j},k_{i},l_{j}$are arbitrary positive numbers by MVF. Since 2D-CCC is a collection of 2D-GCAS, the proposed construction also produces 2D-GCAS with arbitrary array size and flexible set size which has not been reported before either by indirect or direct construction.
The row and column
sequence PMEPR of the 2D-CCC array has been derived which is also not been reported before. The proposed construction has the following generalization abilities: 
\begin{itemize}
    \item The proposed construction provides 2D-GCAS with  array size $p^{m}\times q^{n}$ and set size $p^{k_{1}}q^{k_{2}}$ where $m,n,k_{1}$ and $k_{2}$ are positive integers which generalizes 2D-GCAS given in \cite{liu2022constructions},
    \item The proposed construction yields 2D-CCC with array size $M^{2}\times K^{2}$ and set size $MK$ where $M,K\geq 2$, therefore, the proposed work generalizes the 2D-CCC given in  \cite{farkas2003two}.
    \item The proposed construction  produces 1D-CCC with sequence length $2^{m}$ and set size  $2^{k}$, where $k,m \geq 1$. As a result, the 1D-CCC constructions in \cite{chen2008complete, rathinakumar2008complete} have appeared as a special case of the proposed work.
    \item The proposed construction produces GCAS with array size $2^{m}\times 2^{n}$ and set size $2^{k}$ where $m,n,k\geq 1$. Therefore, the GCAS proposed in \cite{pai2022two,pai2021designing} can be constructed by the proposed method.
    \item The construction of 1D-CCC in  \cite{sarkar2021multivariable} with various lengths is covered by the proposed construction,
\item The proposed method presents $2^{k}$ GCAS with array size $2^{kr}\times2^{ks}$ and set size $2^{k}$ where $k,r,s$ are positive integer.  Therefore, the GCAS proposed in \nocite{wang2021new} \cite[Th.7]{wang2021new12} appeared as a special case of our proposed construction.
\end{itemize}

\par The rest of the paper is structured as follows.
Section \ref{SEC2} provides useful definitions. Section \ref{SEC3} describes 2D-CCC construction. Section \ref{SEC4} examines the PMEPR of row and column sequences in 2D-CCC arrays and provides generalizations of the proposed 2D-CCC. Section \ref{SEC5} compares the proposed 2D-CCC to the current state-of-the-art. Section \ref{SEC6} concludes.

\section{Preliminaries}\label{SEC2}	
\subsection{One-Dimensional Sequences}
	\begin{definition}\cite{ghosh2023construction,shen2023new}
		For two complex-valued sequences $\mathbf{a}= (a_0, a_1, \dots, a_{l-1})$ and $\mathbf{b} = (b_0, b_1, \dots, b_{l-1})$ of length $l$, the aperiodic cross-correlation sum (ACCS) of sequence at the time shift $u$ is defined by
		\begin{equation}
			\boldsymbol{C}(\mathbf{a}, \mathbf{b}) (u) =
			\begin{cases}
				\sum\limits_{i=0}^{l-u-1} a_i b_{i+u}^*,  & 0\leq u < l ;\\
				\sum\limits_{i=0}^{l+u-1} a_{i-u} b_{i}^*, & -l < u \leq -1 ;
			\end{cases}
		\end{equation}
		where $(\cdot)^*$ denotes the complex conjugate. If $\mathbf{a} = \mathbf{b}$, then the function is called AACS and is denoted by $\boldsymbol{C}(\mathbf{a})(u)$.
	\end{definition}
	\begin{definition}\cite{golay1961complementary}
 Two sequences $\mathbf{a}, \mathbf{b}$ each of length $l$ is called Golay complementary pair (GCP) if
		\begin{equation}
			 \boldsymbol{C}(\mathbf{a})(u)+ \boldsymbol{C}(\mathbf{b}) (u) =
			\begin{cases}
				2l,  & u=0 ;\\
				0, & \text{otherwise}.
			\end{cases}
		\end{equation}
	\end{definition}
	\begin{definition}[Code (\cite{ghosh2022direct})]
		An ordered set of sequences $\mathcal{A}=\{\mathbf{a}_0, \mathbf{a}_1, \dots, \mathbf{a}_{r-1}\}$ of the same length $l$, is called a code. The code $\mathcal{A}$ can be written in a matrix form of order $r\times l$ as well, where $i$th row is given by the sequence $\mathbf{a}_{i}$ where $0\leq i \leq r-1$.
	\end{definition}

	\begin{definition}[ACCS of code (\cite{ghosh2022direct})]
		Let $\mathcal{A}_{1}=\{\mathbf{a}_0^1, \mathbf{a}_1^1, \dots, \mathbf{a}_{r-1}^{1}\}$, $ \mathcal{A}_{2}=\{\mathbf{a}_0^2, \mathbf{a}_1^2, \dots, \mathbf{a}_{r-1}^2\}$ be two codes, where $\mathbf{a}_{i}^{1}$ and $\mathbf{a}_{i}^{2}$ are sequences of length $l$. The ACCS of  $\mathcal{A}_{1}$ and $\mathcal{A}_{2}$ is defined by
		\begin{equation} 
			\boldsymbol{C}(\mathcal{A}_{1},\mathcal{A}_{2})(u)=
			\displaystyle \sum _{i=0}^{r-1} \boldsymbol{C}(\mathbf {a}_i ^{1},\mathbf {a}_i ^{2})(u).
		\end{equation}
	\end{definition}

	\begin{definition}[MOGCS(\cite{ghosh2022direct})]
		A collection $\mathcal{X}= \{\mathcal{A}_0, \mathcal{A}_1,\ldots, \mathcal{A}_{k-1}\}$  of codes each of size $r\times l$ is called mutually orthogonal Golay complementary sets (MOGCS) if it satisfies
		\begin{equation} 
			\boldsymbol{C}(\mathcal{A}_{\nu_{1}},\mathcal{A}_{\nu_{2}})(u) =
			\begin{cases}
				rl, & u =0,\quad \nu_{1}=\nu_{2};\\
				0, & 0 < |u | < l, \quad \nu_{1}=\nu_{2};\\
				0, & |u | < l,\quad \nu_{1}\neq \nu_{2}. 
			\end{cases}
		\end{equation}
		When $k=r$, then $\mathcal{X}$ is called CCC.
	\end{definition}
\subsection{Two-Dimensional Array}	
\begin{definition}[\cite{das2020two}]
Let $\mathbf{A}=\left(a_{i,j}\right)$ and $\mathbf{B}=\left(b_{i,j}\right)$ be a complex valued array of size $l_{1} \times l_{2}$ where $0 \leq i<l_{1},0 \leq j<l_{2}$.
The 2D-ACCS of arrays $\mathbf{A}$ and $\mathbf{B}$ at shift $\left(u_{1}, u_{2}\right)$ is defined as
   \begin{equation*}
   \begin{split}
       &\boldsymbol{C}\left(\mathbf{A},\mathbf{B}\right)\left(u_{1}, u_{2}\right)=\\
       &\begin{cases}
\sum_{i=0}^{l_{1}-1-u_{1}}\sum_{j=0}^{l_{2}-1-u_{2}} a_{i, j}b^{*}_{i+u_{1}, j+u_{2}},~~~\text{if}~~0 \leq u_{1}<l_{1},0 \leq u_{2}<l_{2};\\
\sum_{i=0}^{l_{1}-1-u_{1}}\sum_{j=0}^{l_{2}-1+u_{2}} a_{i, j-u_{2}}b^{*}_{i+u_{1}, j},~~~\text{if}~~0 \leq u_{1}<l_{1}, -l_{2} < u_{2}<0;\\
\sum_{i=0}^{l_{1}-1+u_{1}}\sum_{j=0}^{l_{2}-1-u_{2}} a_{i-u_{1}, j}b^{*}_{i, j+u_{2}},~~~\text{if}~-l_{1} < u_{1}<0,0 \leq u_{2}<l_{2};\\
\sum_{i=0}^{l_{1}-1+u_{1}}\sum_{j=0}^{l_{2}-1+u_{2}} a_{i-u_{1},j-u_{2}}b^{*}_{i,j},~~~\text{if}~-l_{1} < u_{1}<0,-l_{2} < u_{2}<0;
\end{cases}
   \end{split}
\end{equation*}
\end{definition}
If $\mathbf{A}=\mathbf{B}, \boldsymbol{C}\left(\mathbf{A}, \mathbf{B}\right)\left(u_{1},u_{2}\right)$ is called the 2D-AACS of $\mathbf{A}$ and referred to as $\boldsymbol{C}\left(\mathbf{A}\right)\left(u_{1},u_{2}\right)$.
\begin{definition}\cite{liu2022constructions}
Let $s\geq2$ be an integer, and define the set $\mathcal{N}=\{\mathbf{A}_{0},\mathbf{A}_{1},\ldots,\mathbf{A}_{s-1}\}$, where $\mathbf{A}_{i}$'s are arrays of  size $l_{1}\times l_{2}$ where $1\leq i\leq s-1$. We refer to the set $\mathcal{N}$ as a two-dimensional Golay complementary array set (2D-GCAS) if
	\begin{equation}
			 \sum_{i=0}^{s-1}\boldsymbol{C}(\mathbf{A}_{i})(u_{1},u_{2})=
			\begin{cases}
				sl_{1}l_{2},  & (u_{1},u_{2})=(0,0) ;\\
				0, & \text{otherwise}.
			\end{cases}
		\end{equation}
	\end{definition}

\begin{definition}[\cite{das2020two}]
Consider the set $\mathcal{M}=\left\{\mathbf{A}^{k} \mid k=\right.$ $0,1, \ldots, s-1\}$, where each set $\mathbf{A}^{k}=\left\{\mathbf{A}_{0}^{k}, \mathbf{A}_{1}^{k}, \ldots, \mathbf{A}_{s-1}^{k}\right\}$ is composed of $s$ arrays of size $l_{1} \times l_{2}$, i.e., each $\mathbf{A}_{i}^{k}$ is an array of size $l_{1}\times l_{2}$ . The set $\mathcal{M}$ is said to be $(s,s,l_{1},l_{2})$ 2D-CCC if the following holds
\begin{equation}
\begin{split}  
&\boldsymbol{C}\left(\mathbf{A}^{k},\mathbf{A}^{k^{\prime}}\right)\left(u_{1},u_{2}\right)\\
&=\sum_{i=0}^{s-1} \boldsymbol{C}\left(\mathbf{A}_{i}^{k}, \mathbf{A}_{i}^{k^{\prime}}\right)\left(u_{1}, u_{2}\right)\\
&= \begin{cases}sl_{1}l_{2}, \quad\left(u_{1}, u_{2}\right)=(0,0), k=k^{\prime} ;\\
0, \quad\left(u_{1}, u_{2}\right)\neq(0,0), k=k^{\prime} ;\\
0,     ~~~~k\neq k^{\prime}.
\end{cases}
\end{split}
\end{equation}.
\end{definition}  
\subsection{Multivariable Function~\cite{ghosh2023optimal2dzcacs}}

Consider two positive integers $a$ and $b$ and consider $i$ and $j$ such that $1\leq i\leq a$ and $1\leq j\leq b$. For every value of $i$ and $j$ within these ranges, associate a positive integer, denoted $m_i$ and $n_j$ respectively. Additionally, associate a prime number, denoted $p_{i}$ and $q_{j}$ respectively. Now, consider the sets $\mathbb{A}_{p_{i}}=\{0,1,\ldots,p_{i}-1\}$ and $\mathbb{A}_{q_{j}}=\{0,1,\ldots,q_j-1\}$, which are subsets of the ring of integers $\mathbb{Z}$ under usual addition and multiplication.
	A multi-variable function (MVF) can be defined with a domain constructed from cartesian products of the power sets of $\mathbb{A}_{p_i}$ and $\mathbb{A}_{q_j}$, and a range in the ring of integers modulo a positive integer $\lambda$, i.e., $\mathbb{Z}_\lambda=\{0,1,\ldots,\lambda-1\}$ as
	$$f: \left(\mathbb{A}_{p_1}^{m_1}\!\!\times \!\mathbb{A}_{p_2}^{m_2}\!\times \dots \times \mathbb{A}_{p_a}^{m_a}\right)\!\times\!\left(\mathbb{A}_{q_1}^{n_1}\!\!\times \mathbb{A}_{q_2}^{n_2} \times \dots \times \mathbb{A}_{q_b}^{n_b}\right)\!\!\rightarrow\!\! \mathbb{Z}_\lambda,$$
	 Next, we consider two integers $c$ and $d$ in the range of  $0\leq c \leq p_{1}^{m_{1}}p_{2}^{m_{2}}\ldots p_{a}^{m_{a}}-1$ and  $0\leq d\leq q_{1}^{n_{1}}q_{2}^{n_{2}}\ldots q_{b}^{n_{b}}-1$.  The integers $c$ and $d$ can be decomposed as per the following equations:
 \begin{equation}
 \label{c,d}
    \begin{split}
&c=c_{1}+c_{2}p_{1}^{m_{1}}+\ldots+c_{a}p_{1}^{m_{1}}p_{2}^{m_{2}}\ldots p_{a-1}^{m_{a-1}},\\
&d=d_{1}+d_{2}q_{1}^{n_{1}}+\ldots+d_{b}q_{1}^{n_{1}}q_{2}^{n_{2}}\ldots q_{b-1}^{n_{b-1}},
    \end{split}
\end{equation}
where $0\leq c_{i}\leq p_{i}^{m_{i}}-1$ and $0\leq d_{j}\leq q_{j}^{n_{j}}-1$. 
Let $\mathbf{C}_{i}=(c_{i,1},c_{i,2},\ldots,c_{i,m_{i}})\in \mathbb{A}_{p_{i}}^{m_{i}}$
be the vector representation of $c_{i}$ with base $p_{i}$, i.e., $c_{i}=\sum_{k=1}^{m_{i}}c_{i,k}p_{i}^{k-1}$ and $\mathbf{D}_{j}=(d_{j,1},d_{j,2},\ldots,d_{j,n_{j}})\in \mathbb{A}_{q_{j}}^{n_{j}}$ be the vector representation of  $d_{j}$  with base $q_{j}$,  i.e., $d_{j}=\sum_{l=1}^{n_{j}}d_{j,l}q_{j}^{l-1}$ where $1\leq i\leq a$, $1\leq j\leq b, 0\leq c_{i,k}\leq p_{i}-1,0\leq d_{j,l}\leq q_{j}-1 $. We define vectors associated with $c$ and $d$ as $$\phi(c)=\left(\mathbf{C}_{1},\mathbf{C}_{2},\ldots,\mathbf{C}_{a}\right)$$ and $$\phi(d)=\left(\mathbf{D}_{1},\mathbf{D}_{2},\ldots,\mathbf{D}_{b}\right),$$ respectively.
We define an array associated with $f$ as $\psi({f})$, where
\begin{equation}
    \psi({f})=\left(\begin{array}{cccc}
\omega_{\lambda}^{f_{0,0}} & \omega_{\lambda}^{f_{0,1}} & \cdots & \omega_{\lambda}^{f_{0,m}} \\
\omega_{\lambda}^{f_{1,0}} & \omega_{\lambda}^{f_{1,1}} & \cdots & \omega_{\lambda}^{f_{1,m}} \\
\vdots & \vdots & \ddots & \vdots \\
\omega_{\lambda}^{f_{n,0}} & \omega_{\lambda}^{f_{n,1}} & \cdots & \omega_{\lambda}^{f_{n,m}}
\end{array}\right)
\end{equation}
 $m=p_{1}^{m_{1}}p_{2}^{m_{2}}\ldots p_{a}^{m_{a}}-1,n=q_{1}^{n_{1}}q_{2}^{n_{2}}\ldots q_{b}^{n_{b}}-1$ and the $(c,d)$-th component of $\psi({f})$ is given by $\omega_{\lambda}^{f_{c,d}}$, $\omega_{\lambda}=\exp\left(2\pi\sqrt{-1}/\lambda\right)$ and
$f_{c,d}=f\left(\phi(c),\phi(d)\right)$. For simplicity, from from hereon, we  denote $\left(\mathbb{A}_{p_1}^{m_1}\!\!\times \!\mathbb{A}_{p_2}^{m_2}\!\times \dots \times \mathbb{A}_{p_a}^{m_a}\right)\!\times\!$$(\mathbb{A}_{q_1}^{n_1}\!\!\times \mathbb{A}_{q_2}^{n_2} \times \dots \times \mathbb{A}_{q_b}^{n_b})$ by $\mathcal{C}$.
\subsection{Peak-to-Mean Envelope Power Ratio (PMEPR)
\cite{pai2022two}}
Let $\lambda\geq 2$ and $\mathbb{Z}_{\lambda}$ denotes ring of integer modulo $\lambda$. Let $\mathbf{x}$ be a $\mathbb{Z}_{\lambda}$ valued vector of length $l$ given by $\mathbf{x}=\left(x_{0}, x_{1}, \ldots, x_{l-1}\right)$  
. For the vector $\mathbf{x}$, the OFDM signal is the real part of the complex envelope
$$
S_{\mathbf{x}}(t)=\sum_{i=1}^{l-1} \omega_{\lambda}^{x_{ i}+\lambda f_{i} t},
$$
where $f_{i}=f+i \Delta f, f$ is a constant, $\Delta f$ is an integer multiple of OFDM symbol rate, and $0 \leq \Delta f t \leq 1$. 
The PMEPR of the vector $\mathbf{x}$ is given by
$$
PMEPR\left(\mathbf{x}\right)=\sup _{0 \leq \Delta f t \leq 1} \frac{\abs{S_{\mathbf{x}}(t)}^{2}}{l}.
$$  The term $\frac{\abs{S_{\mathbf{x}}(t)}^{2}}{l}$ is called the instantaneous-to-average power ratio (IAPR).
\section{Proposed Construction of 2D-CCC }\label{SEC3}
In this Section, we construct 2D-CCC and give an example. Let  $0\leq \gamma\leq p_{1}^{m_{1}}p_{2}^{m_{2}}\ldots p_{a}^{m_{a}}-1$ and $0\leq \mu\leq q_{1}^{n_{1}}q_{2}^{n_{2}}\ldots q_{b}^{n_{b}}-1$ be such that
\begin{equation*}
    \begin{split}
    &\gamma=\gamma_{1}+\sum_{i=2}^{a}\gamma_{i}\left(\prod_{i_{1}=1}^{i-1}p_{i_{1}}^{m_{i_{1}}}\right),\\
     &\mu\!=\!\mu_{1}+\sum_{j=2}^{b}\mu_{j}\!\!\left(\prod_{j_{1}=1}^{j-1}q_{j_{1}}^{n_{j_{1}}}\!\right),
    \end{split}
\end{equation*}
where $0\leq \gamma_{i}\leq p_{i}^{m_{i}}-1$ and $0\leq \mu_{j}\leq q_{j}^{n_{j}}-1$, $p_i$ is a prime number or $1$ and $q_j$ is a prime number. Let $\boldsymbol{\gamma}_{i}=(\gamma_{i,1},\gamma_{i,2},\ldots,\gamma_{i,m_{i}})\in \mathbb{A}_{p_{i}}^{m_{i}}$ be the vector representation of $\gamma_{i}$ with base $p_{i}$, i.e, $\gamma_{i}=\sum_{k=1}^{m_{i}}\gamma_{i,k}p_{i}^{k-1}$. Similarly,
$\boldsymbol{\mu}_{j}=(\mu_{j,1},\mu_{j,2},\ldots,\mu_{j,n_{j}})\in \mathbb{A}_{q_{j}}^{n_{j}}$ is the vector representation of $\mu_{j}$ with base $q_{j}$. Let $$\phi(\gamma)=\left(\boldsymbol{\gamma}_{1},\boldsymbol{\gamma}_{2},\ldots,\boldsymbol{\gamma}_{a}\right)\in \mathbb{A}_{p_1}^{m_1}\!\!\times \!\mathbb{A}_{p_2}^{m_2}\!\times \dots \times \mathbb{A}_{p_a}^{m_a}$$ be the vector associated with $\gamma$ and $$\phi(\mu)=\left(\boldsymbol{\mu}_{1},\boldsymbol{\mu}_{2},\ldots,\boldsymbol{\mu}_{b}\right)\in \mathbb{A}_{q_1}^{n_1}\!\!\times \!\mathbb{A}_{q_2}^{n_2}\!\times \dots \times \mathbb{A}_{q_b}^{n_b}$$ be the vector associated with $\mu$.
Let
$\pi_{i}$ and $\sigma_{j}$ be any permutations of the set  $\{1,2,\ldots,m_{i}\}$ and  $\{1,2,\ldots,n_{j}\}$, respectively.
Let us also define a MVF $f:\mathcal{C}\rightarrow\mathbb{Z}_\lambda,$ as 
\begin{equation*}
\label{kanha}
    \begin{split}
           &f\big(\phi(\gamma),\phi(\mu)\big)=f\big((\boldsymbol{\gamma}_{{1}},\boldsymbol{\gamma}_{{2}}, \ldots, \boldsymbol{\gamma}_{{a}}),(\boldsymbol{\mu}_{{1}},\boldsymbol{\mu}_{{2}}, \ldots, \boldsymbol{\mu}_{b})\big)\\
&=\!\!\sum_{i=1}^{a}\!\frac{\lambda}{p_{i}}\!\!\sum_{e=1}^{m_{i}-1}\!\!\gamma_{i, \pi_{i}(e)} \gamma_{i, \pi_{i}(e+1)}+\!\!\sum_{j=1}^{b}\!\frac{\lambda}{q_{j}}\!\!\sum_{o=1}^{n_{j}-1} \mu_{j, \sigma_{j}(o)} \mu_{j, \sigma_{j}(o+1)}\\
&+\sum_{i=1}^{a}\!\sum_{e=1}^{m_{i}}\!d_{i,e} \gamma_{i, e}\!\!+\!\!\sum_{j=1}^{b}\!\sum_{o=1}^{n_{j}} c_{j,o} \mu_{j,o}~(\text{mod}~\lambda)
    \end{split}
\end{equation*}
  \vspace{-0.05cm}where $\lambda$ is divisible by l.c.m$(p_{1},\ldots,p_{a},q_{1},\ldots,q_{b})$, $d_{i,e},c_{j,o}\in \mathbb{A}_{\lambda}$ and 
  all the operations
between those integers, the modulus is performed. From now hereon, we will exclude the term $(\text{mod}~\lambda)$ from the function. Let us define sets $\Theta$ and $T$ as
   \begin{equation*}
  \label{kaka}
      \begin{split}
       &\Theta=\{\theta:\theta=(r_{{1}}, r_{{2}}, \ldots, r_{{a}},s_{{1}}, s_{{2}}, \ldots, s_{{b}})\},\\
       &T=\{t:t=(x_{{1}}, x_{{2}}, \ldots, x_{{a}},y_{{1}}, y_{{2}}, \ldots, y_{{b}})\},
      \end{split}
  \end{equation*}
  where $0\leq r_{i},x_{i}< p_{i}^{k_{i}}$ and $0\leq s_{j},y_{j}< q_{j}^{l_{j}}$ and $k_{i},l_{j}$ are positive integers.
  Again, we define a function 	$a^{\theta}_{t}\!\!:\mathcal{C}\rightarrow\!\! \mathbb{Z}_\lambda,$ as
  \vspace{-0.3cm}
  \begin{equation}{\label{5}}
 \begin{split}
 &a^{\theta}_{t}\big(\phi(\gamma),\phi(\mu)\big)=a^{\theta}_{t}\big((\boldsymbol{\gamma}_{{1}},\boldsymbol{\gamma}_{{2}}, \ldots, \boldsymbol{\gamma}_{{a}}),(\boldsymbol{\mu}_{{1}},\boldsymbol{\mu}_{{2}}, \ldots, \boldsymbol{\mu}_{b})\big)\\
     &=\!\!f\big(\phi(\gamma),\phi(\mu)\big)\!\!+\!\!\sum_{i=1}^{a} \frac{\lambda}{p_{i}} \gamma_{i, \pi_{i}(1)}{r_{i}}+\!\sum_{j=1}^{b} \frac{\lambda}{q_{j}} \mu_{j, \sigma_{j}(1)}{s_{j}}\\
    &+\sum_{i=1}^{a} \frac{\lambda}{p_{i}} \gamma_{i, \pi_{i}(m_{i})}{x_{i}}+\sum_{j=1}^{b} \frac{\lambda}{q_{j}} \mu_{j, \sigma_{j}(n_{j})}{y_{j}},
 \end{split}
\end{equation}
where $d_{\theta}\in \mathbb{A_{\lambda}}$ be any integer for each value of $\theta$. For simplicity, we denote $a^{\theta}_{t}\big(\phi({\gamma}),\phi({\mu})\big)$ by $(a^{\theta}_{t})_{\gamma,\mu}$ and $f\big(\phi({\gamma}),\phi({\mu})\big)$ by $f_{\gamma,\mu}$.
          

     
\!\!\!\!\!
\begin{theorem}\label{KB}
 We define $G_{t}=\{a^{\theta}_{t}:\theta\in \Theta\}$ and $\psi(G_{t})=\{\psi\left(a^{\theta}_{t}\right):\theta\in \Theta\}$. Then the set  $\{\psi(G_{t}):t\in T\}$ forms a 2D
$
(\alpha,\alpha,m,n)-CCC
$, where, $\alpha=\prod_{i=1}^{a}p^{k_{i}}_{i}\prod_{j=1}^{b}q^{l_{j}}_{j}$, $m=\prod_{i=1}^{a}p_{i}^{m_{i}}$ and $n=\prod_{j=1}^{b}q_{j}^{n_{j}}$.
\end{theorem}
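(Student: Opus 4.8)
The plan is to exploit the additive separation of the defining function in~\eqref{5} and thereby split the two-dimensional complementary problem into two one-dimensional ones. First I would note that the exponent $(a^{\theta}_{t})_{\gamma,\mu}$ decomposes as a sum of a part depending only on the $p$-variables $\phi(\gamma)$ (the chain form $\sum_{i}\frac{\lambda}{p_{i}}\sum_{e}\gamma_{i,\pi_{i}(e)}\gamma_{i,\pi_{i}(e+1)}$, the linear terms $d_{i,e}\gamma_{i,e}$, and the boundary terms carrying $r_{i},x_{i}$) and a part depending only on the $q$-variables $\phi(\mu)$. Writing $(a^{\theta}_{t})_{\gamma,\mu}=g_{\mathbf{r},\mathbf{x}}(\gamma)+h_{\mathbf{s},\mathbf{y}}(\mu)$ with $\mathbf{r}=(r_{1},\dots,r_{a})$, $\mathbf{x}=(x_{1},\dots,x_{a})$, and similarly for $\mathbf{s},\mathbf{y}$, each array $\psi(a^{\theta}_{t})$ is the outer product of a column sequence $\mathbf{g}_{\mathbf{r},\mathbf{x}}=(\omega_{\lambda}^{g_{\mathbf{r},\mathbf{x}}(\gamma)})_{\gamma}$ and a row sequence $\mathbf{h}_{\mathbf{s},\mathbf{y}}$. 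For outer products the 2D aperiodic cross-correlation factorises into a product of 1D aperiodic cross-correlations, and since $\theta=(\mathbf{r},\mathbf{s})$ and $t=(\mathbf{x},\mathbf{y})$ split along the same lines, the sum over $\theta\in\Theta$ factorises into a sum over $\mathbf{r}$ times a sum over $\mathbf{s}$. Hence
\[ \boldsymbol{C}(\psi(G_{t}),\psi(G_{t'}))(u_{1},u_{2})=P_{\mathbf{x},\mathbf{x}'}(u_{1})\,Q_{\mathbf{y},\mathbf{y}'}(u_{2}), \]
where $P_{\mathbf{x},\mathbf{x}'}(u_{1})=\sum_{\mathbf{r}}\boldsymbol{C}(\mathbf{g}_{\mathbf{r},\mathbf{x}},\mathbf{g}_{\mathbf{r},\mathbf{x}'})(u_{1})$ and $Q$ is the analogous $q$-part. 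It then suffices to show that $P$ (and, by the identical argument, $Q$) realises the one-dimensional complete-complementary values, with $K_{p}=\prod_{i}p_{i}^{k_{i}}$, $K_{q}=\prod_{j}q_{j}^{l_{j}}$, so that the product reproduces $\alpha mn=(K_{p}m)(K_{q}n)$.

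Next I would carry out the sum over the set index $\mathbf{r}$ before the sum over $\gamma$. The only $\mathbf{r}$-dependence is linear, entering through $\sum_{i}\frac{\lambda}{p_{i}}r_{i}\delta_{i}$ with $\delta_{i}:=\gamma_{i,\pi_{i}(1)}-(\gamma+u_{1})_{i,\pi_{i}(1)}$. Since the factor $\frac{\lambda}{p_{i}}$ reduces the relevant root of unity to $\omega_{p_{i}}$ and $r_{i}$ runs over $\{0,\dots,p_{i}^{k_{i}}-1\}$, the geometric sum $\sum_{r_{i}}\omega_{p_{i}}^{r_{i}\delta_{i}}$ equals $p_{i}^{k_{i}}$ when the boundary digits of $\gamma$ and $\gamma+u_{1}$ agree modulo $p_{i}$ and vanishes otherwise. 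Thus summing over $\mathbf{r}$ contributes the factor $K_{p}$ and restricts the remaining $\gamma$-sum to indices whose boundary digits are preserved by the shift.

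On the restricted index set I would split into cases. For $u_{1}=0$ the constraint is vacuous, the chain and $d$-linear parts cancel, and the surviving exponent is the boundary difference $\sum_{i}\frac{\lambda}{p_{i}}\gamma_{i,\pi_{i}(m_{i})}(x_{i}-x_{i}')$; summing the untouched digits yields the remaining factors of $m$, while the boundary character sum is full exactly when $\mathbf{x}=\mathbf{x}'$ and is $0$ otherwise, giving $P_{\mathbf{x},\mathbf{x}'}(0)=K_{p}m$ or $0$ as required. For $u_{1}\neq0$ I would produce a cancelling pairing in the spirit of Davis--Jedwab: using the chain structure of the quadratic form together with the digit agreement forced above, one locates the position along a $\pi_{i}$-chain at which $\gamma$ and $\gamma+u_{1}$ first differ, frees the neighbouring coordinate, and shows that the resulting inner sum is a complete nontrivial character sum over $\mathbb{Z}_{p_{i}}$, hence zero; summing these vanishing inner sums over the rest of the index set gives $P_{\mathbf{x},\mathbf{x}'}(u_{1})=0$ for every nonzero shift and every pair of codes.

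Finally I would assemble the factors: $P\cdot Q$ equals $\alpha mn$ precisely at $(u_{1},u_{2})=(0,0)$ with $t=t'$; it vanishes whenever $(u_{1},u_{2})\neq(0,0)$ with $t=t'$ since at least one factor is a nonzero-shift term; and it vanishes whenever $t\neq t'$, because then $\mathbf{x}\neq\mathbf{x}'$ or $\mathbf{y}\neq\mathbf{y}'$ forces $P_{\mathbf{x},\mathbf{x}'}(0)=0$ or $Q_{\mathbf{y},\mathbf{y}'}(0)=0$ while all nonzero shifts already vanish. These are exactly the three defining conditions of an $(\alpha,\alpha,m,n)$ 2D-CCC. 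I expect the main obstacle to be the $u_{1}\neq0$ cancellation: because $u_{1}$ is added to $\gamma$ in a mixed-radix representation, carries may propagate within and across the prime blocks, so pinning down the correct pivot coordinate and verifying that the inner character sum is genuinely complete and nontrivial demands careful bookkeeping of where $\gamma$ and $\gamma+u_{1}$ differ. The separability and the set-index summation are routine by comparison; the delicate work lies entirely in this pairing step (and, correspondingly, in confirming that the boundary terms separate all $\alpha$ codes).
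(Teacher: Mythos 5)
Your decomposition is sound and, at bottom, runs on the same two engines as the paper's Appendix~A proof: the character sum over the set index $r_{i}$ (resp.\ $s_{j}$) annihilates every pair $(\gamma,\gamma+u_{1})$ whose digits at position $\pi_{i}(1)$ disagree, and a Davis--Jedwab digit-pairing at the chain position $\pi_{i}(v-1)$ preceding the first disagreement kills the rest for $u_{1}\neq0$. What you do differently is to make the rank-one structure explicit: since neither $f$ nor the boundary terms contain mixed $\gamma\mu$ monomials, each $\psi(a^{\theta}_{t})$ is an outer product, the 2D correlation factorises as $P_{\mathbf{x},\mathbf{x}'}(u_{1})\,Q_{\mathbf{y},\mathbf{y}'}(u_{2})$, and the 2D claim reduces to two 1D claims. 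The paper exploits separability only to factor the $\theta$-sum into $F\cdot G$, keeps the double index sum intact, treats $u_{2}>0$ and $(u_{1},u_{2})=(0,0)$ explicitly, and disposes of the remaining sign and shift combinations with ``the proof follows exactly the same as Case 1''. Your organisation covers all shift cases uniformly and is the cleaner of the two.

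However, one step in your argument fails as written, and it is precisely the step the paper also gets wrong. You assert that the boundary character sum ``is full exactly when $\mathbf{x}=\mathbf{x}'$ and is $0$ otherwise''. The relevant sum is $\prod_{i}p_{i}^{m_{i}-1}\sum_{g=0}^{p_{i}-1}\omega_{p_{i}}^{g(x_{i}-x_{i}')}$, which vanishes if and only if $p_{i}\nmid(x_{i}-x_{i}')$ --- not if and only if $x_{i}\neq x_{i}'$. Since $x_{i}$ ranges over $\{0,\dots,p_{i}^{k_{i}}-1\}$ but enters $a^{\theta}_{t}$ only through $\frac{\lambda}{p_{i}}\gamma_{i,\pi_{i}(m_{i})}x_{i}\pmod{\lambda}$, i.e.\ only through $x_{i}\bmod p_{i}$, the codes $G_{t}$ and $G_{t'}$ with $x_{i}'=x_{i}+p_{i}$ are literally identical whenever some $k_{i}\geq2$, and then $P_{\mathbf{x},\mathbf{x}'}(0)=K_{p}m\neq0$ for $t\neq t'$; the same collapse affects $r_{i}$ and $\Theta$. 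So your proof goes through only when every $k_{i}=l_{j}=1$. The paper's Case~2, Subcase~(i) makes the identical unjustified jump from $y_{j}\neq y_{j}'$ to a vanishing $\omega_{q_{j}}$-character sum, so this is a defect of the statement's proof in the paper as much as of your reconstruction. To realise a genuine set size $p_{i}^{k_{i}}$ one must spend $k_{i}$ distinct chain positions on the set index (e.g.\ $\sum_{e=1}^{k_{i}}\frac{\lambda}{p_{i}}\gamma_{i,\pi_{i}(e)}r_{i,e}$ with $(r_{i,e})$ the base-$p_{i}$ digits of $r_{i}$, and symmetrically at the other end for $x_{i}$), which would also change the bookkeeping in your pairing step since the protected positions then form blocks rather than single coordinates.
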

\begin{proof}
    The proof of this theorem is given in Appendix \ref{kn}
\end{proof}
\begin{example}
Let $a=1,b=1,p_{1}=2,q_1=3,m_1=2,k_{1}=1,l_{1}=1$, and $n_1=2$. Consider $\pi_1$ and $\sigma_1$ are permutations of the set $\{1,2\}$  such that $\pi_{1}(1)=2, \pi_{1}(2)=1$ and  $\sigma_{1}(1)=2, \sigma_{1}(2)=1$.
Let $\boldsymbol{\gamma}_{1}=(\gamma_{11},\gamma_{12})$ be the vector associated with ${\gamma}_{1}$ where $0\leq \gamma_{1}\leq 3$  and $\boldsymbol{\mu}_{1}=(\mu_{11},\mu_{12})$ be the vector associated with $\mu_{1}$ where $0\leq\mu_{1}\leq 8$.
We define a MVF $f:\mathbb{A}_{2}^{2}\times \mathbb{A}_{3}^{2}\rightarrow \mathbb{Z}_{6}$ as
\begin{equation*}
    \begin{split}
        f\left(\boldsymbol{\gamma}_{1},\boldsymbol{\mu}_{1}\right)=&3\gamma_{1,2}\gamma_{1,1}+\gamma_{1,1}+2\gamma_{1,2}+2\mu_{1,2}\mu_{1,1}+2\mu_{1,1}+\mu_{1,2}.
    \end{split}
\end{equation*}
Consider $\Theta=\{\theta:\theta=(r_1,s_1)\}$, $T=\{t:t=(x_1,y_1)\}$ where $0\leq r_1,x_1\leq 1$,$0\leq s_1,y_1\leq 2$.
Let $c_{\theta}$ be any element in $\mathbb{A}_{6}$, then by the defination of $a_{t}^{\theta}$, we have
\begin{equation*}
\begin{split}
    a_{t}^{\theta}\left(\boldsymbol{\gamma}_{1},\boldsymbol{\mu}_{1}\right)=
    f\left(\boldsymbol{\gamma}_{1},\boldsymbol{\mu}_{1}\right)+3\gamma_{1,2}r_1+3\gamma_{1,1}x_1+2\mu_{1,2}s_1+2\mu_{1\sigma_{1,1}}y_1+c_{\theta}.
\end{split}
\end{equation*}
 Consider a set $G_{t}=\{a^{\theta}_{t}:\theta\in \Theta\}$, then the set  $\mathcal{G}=\{G_{t}:t\in T\}$ forms a 2D 
$
(6,6,4,9)-CCC$. The AACS of any code set from $\mathcal{G}$  is given in  Fig.\ref{fig:my_label3} and the ACCS between any two code sets from $\mathcal{G}$ is given in Fig. \ref{fig:my_label2}. 
\end{example}
\begin{figure}
    \centering
    \includegraphics[width=.6 \textwidth]{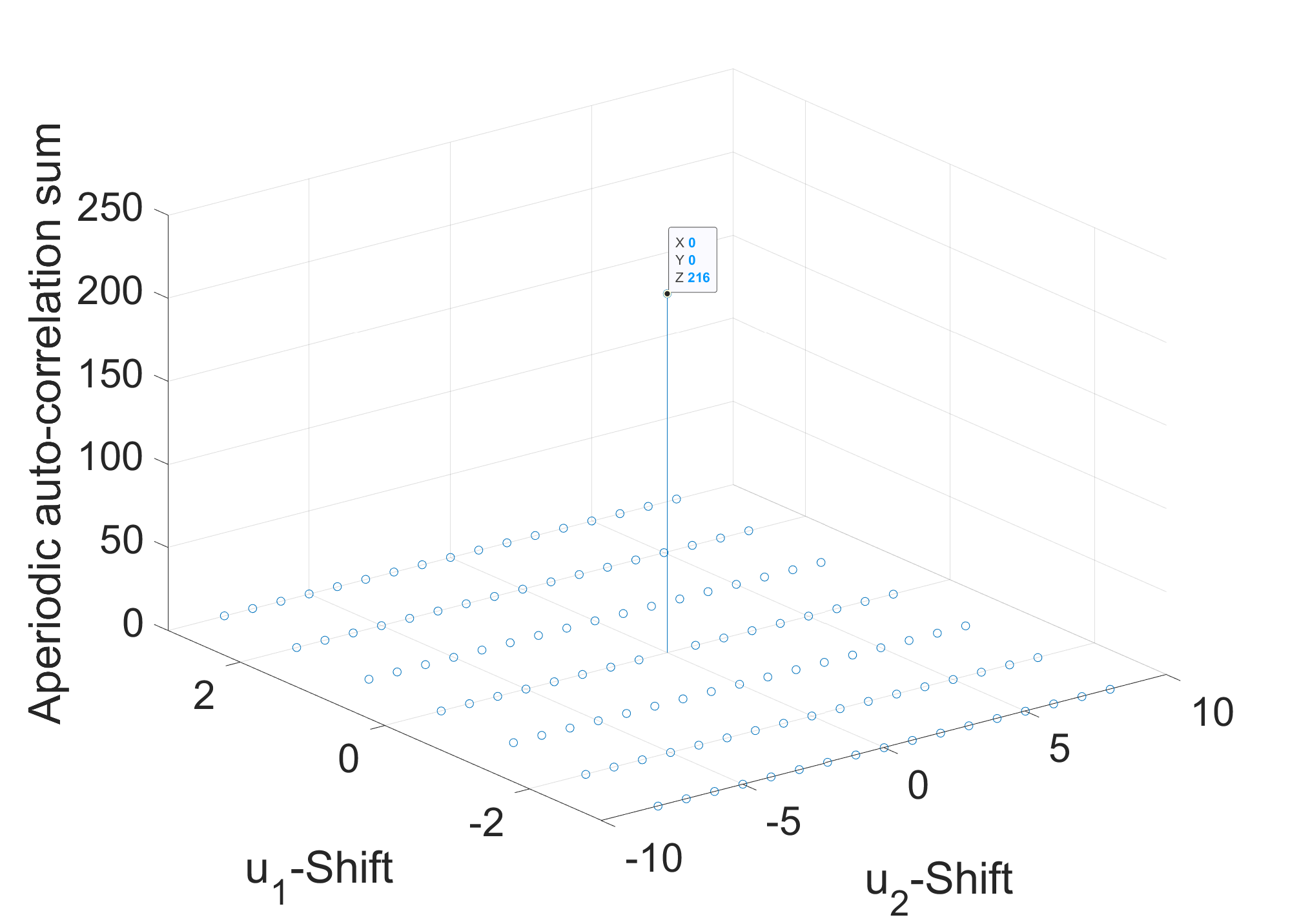}
    \caption{Auto-correlation result of any set of array from $\mathcal{G},$}
    \label{fig:my_label3}
\end{figure}
\begin{figure}[H]
    \centering
    \includegraphics[width=.6 \textwidth]{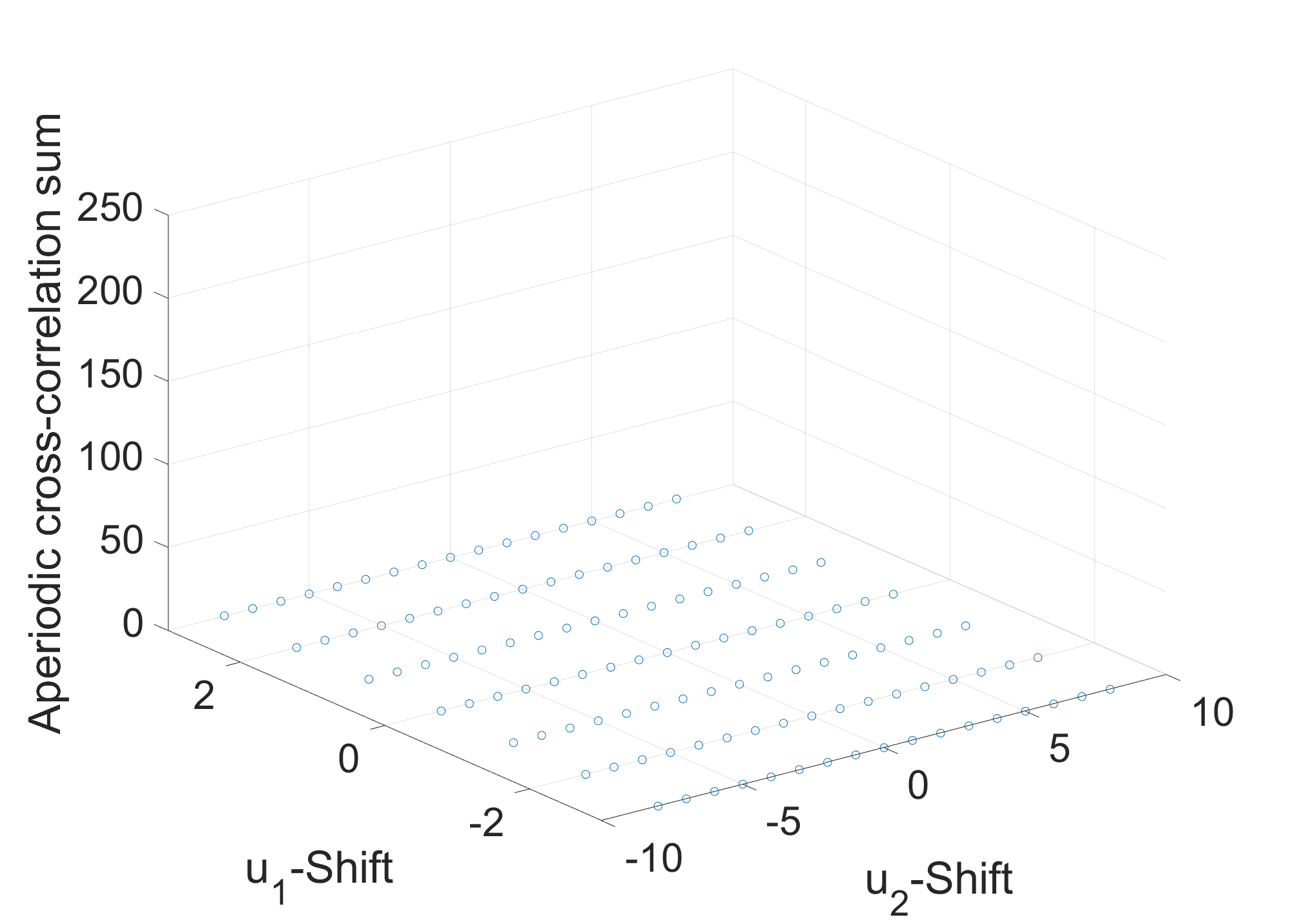}
    \caption{Cross-correlation result of any two sets of array from $\mathcal{G}.$}
    \label{fig:my_label2}
\end{figure}

\section{Bound for PMEPR}\label{SEC4}
In this section we  derive the PMEPR bound of the $2$D-CCC arrays given in \textit{Theorem 1}.\\
\begin{theorem}\label{PMEPRBound}
Given the 2D-CCC arrays as defined in \textit{Theorem \ref{KB}}, the PMEPR of the arrays can be characterized as follows: the PMEPR of the column sequences is bounded by $\max \{p_{i}^{k_{i}}:1\leq i\leq a\}$. Similarly, the PMEPR of the row sequences is bounded by $\max \{q_{j}^{l_{j}}:1\leq j\leq b\}$.
\end{theorem}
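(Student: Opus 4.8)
The plan is to reduce the whole statement to the standard envelope bound for complementary sets: if a length-$l$ sequence $\mathbf{x}$ is one member of a Golay complementary set of size $H$, then summing $|S(t)|^2$ over the set gives $\sum_{\tau}\big(\sum_{\text{members}} C(\cdot)(\tau)\big)e^{2\pi i\tau\Delta ft}=Hl$, a constant in $t$; since every term is nonnegative this forces $|S_{\mathbf{x}}(t)|^2\le Hl$, hence $\mathrm{PMEPR}(\mathbf{x})\le H$. So it suffices to exhibit, for an arbitrary column sequence, a Golay complementary set of size $\max_i p_i^{k_i}$ containing it, and symmetrically for rows with $\max_j q_j^{l_j}$.

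First I would fix the column index, i.e. fix $\mu=\mu_0$, and view the column as the map $\gamma\mapsto(a^{\theta}_{t})_{\gamma,\mu_0}$. Every term of $a^{\theta}_{t}$ depending only on $\mu$ becomes a global unit-modulus phase, irrelevant to PMEPR, so the column is governed by the $\gamma$-part of the MVF: the intra-block path quadratics $\frac{\lambda}{p_i}\sum_{e}\gamma_{i,\pi_i(e)}\gamma_{i,\pi_i(e+1)}$, the linear terms $\sum_{e}d_{i,e}\gamma_{i,e}$, and the endpoint terms $\frac{\lambda}{p_i}\gamma_{i,\pi_i(1)}r_i$ and $\frac{\lambda}{p_i}\gamma_{i,\pi_i(m_i)}x_i$. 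Because $\gamma$ is a mixed-radix index over the prime blocks and the $\gamma$-part is a \emph{sum} of per-block functions $g_i(\gamma_i)$, the column OFDM signal factorizes as $S(t)=e^{\,\mathrm{const}}\prod_{i=1}^{a}S_i(P_i\,\Delta f\,t)$ with $P_i=\prod_{i'<i}p_{i'}^{m_{i'}}$, where $S_i$ is the length-$p_i^{m_i}$ signal attached to $g_i$. This isolates the analysis to the per-block signals $S_i$, each of which is a $p_i$-ary generalized Reed--Muller/Golay object.

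Next I would analyze a single block. The function $g_i$ has exactly the path-quadratic-plus-linear form that makes the block sequence a $p_i$-ary Golay sequence, with the endpoint label $r_i$ (entering through $\frac{\lambda}{p_i}\gamma_{i,\pi_i(1)}r_i$) generating its complementary mates. I would verify that summing $|S_i|^2$ as $r_i$ runs through $\{0,1,\dots,p_i^{k_i}-1\}$ is constant in $t$: since $\frac{\lambda}{p_i}r_i$ is periodic in $r_i$ with period $p_i$, each residue class modulo $p_i$ is visited $p_i^{k_i-1}$ times and the inner sum collapses to the Golay constant-sum identity, giving block-$i$ complementary-set size $p_i^{k_i}$. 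The \emph{main obstacle} is to pass from these per-block identities to a single complementary set for the full column and to pin its size down to $\max_i p_i^{k_i}$: because the factors $S_i(P_i\Delta f t)$ have their peaks at commensurate values of $\Delta f t$ by virtue of the nested radices $P_i$, a naive product argument only yields $\prod_i p_i^{k_i}$, so one must argue that the envelope is controlled by the dominant block. I would do this by selecting the index $i^{\star}$ attaining the maximum and carrying the remaining factors through their own constant-sum structure, so that the tensored complementary identity survives with effective size $H=\max_i p_i^{k_i}$; plugging $H$ into the envelope bound yields the column claim. The row bound follows verbatim with $(\gamma,p_i,m_i,k_i,\pi_i)$ replaced by $(\mu,q_j,n_j,l_j,\sigma_j)$.
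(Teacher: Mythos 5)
Your reduction to the complementary-set envelope bound and the factorization $S(t)=e^{\mathrm{const}}\prod_{i=1}^{a}S_i(P_i\,\Delta f\,t)$ are both correct, and up to that point you are close to the paper, which works directly with aperiodic autocorrelations and the sub-family $\Theta_1\subset\Theta$ obtained by letting a single $r_i$ run over $\{0,\dots,p_i^{k_i}-1\}$. The genuine gap is exactly the step you flag as the ``main obstacle'' and then assert away: passing from $\prod_i p_i^{k_i}$ to $\max_i p_i^{k_i}$. The ``constant-sum structure'' of a block $i\neq i^{\star}$ is a statement about the \emph{average} of $\lvert S_i\rvert^2$ over $r_i$; once $r_i$ is held fixed (as it must be, since you are bounding one specific column), the only pointwise information is $\lvert S_i(u)\rvert^2\le p_i^{2m_i}$, and this is attained. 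Concretely, take $a=2$, $p_1=2$, $p_2=3$, $m_1=m_2=k_1=k_2=1$, all $d_{i,e}=c_{j,o}=0$ and $\theta=t=\mathbf{0}$: the $\gamma$-part of $a_t^{\theta}$ vanishes identically, the column is the all-ones vector of length $6$, and its PMEPR is $6=\prod_i p_i^{k_i}$, strictly larger than $\max_i p_i^{k_i}=3$. No choice of $i^{\star}$ can rescue a tensored complementary identity with effective size $\max_i p_i^{k_i}$. What your argument does prove cleanly is $\mathrm{PMEPR}\le\prod_i p_i^{k_i}$ for columns (and $\prod_j q_j^{l_j}$ for rows), which coincides with the stated bound only when $a=1$ (resp.\ $b=1$), e.g.\ in the power-of-two case $a=b=1$, $p_1=q_1=2$, $k_1=l_1=1$ emphasized in the abstract.

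You should also be aware that the paper's own proof buries the same difficulty in a different place: it asserts $\sum_{\theta_1\in\Theta_1}\boldsymbol{C}\bigl[(\psi(a_t^{\theta_1}))_d\bigr](u_1)=0$ for all $u_1\neq 0$, i.e.\ that varying the single coordinate $r_i$ already yields a GCS of size $p_i^{k_i}$. That identity fails for shifts $u_1$ that leave the leading digit $\gamma_{i,\pi_i(1)}$ of block $i$ unchanged while moving another block (in the example above, $u_1=2$: one finds $\sum_{r_1=0}^{1}\boldsymbol{C}(\cdot)(2)=8\neq0$), which is precisely the configuration that defeats your dominant-block selection. So the gap you left open is not a technicality to be patched: either the bound must be weakened to $\prod_i p_i^{k_i}$ (resp.\ $\prod_j q_j^{l_j}$), which your product factorization establishes, or the MVF must be modified to couple the blocks.
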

\begin{proof}
    
  \par From the \textit{Theorem 1}, we have 
     $\psi(G_{t})=\{\psi(a_{t}^{\theta}):\theta\in \Theta\}$.
 Let $\left(\psi(a_{t}^{\theta})\right)_{d}$ denotes the $d$-th column of the matrix $\psi(a_{t}^{\theta})$ where $\theta=(r_{1},r_{2},\ldots,r_{a},s_{1},s_{2},\ldots,s_{b})$. From the definition of the complex envelope,  it is easy to observe that for any  sequence $\mathbf{x}$ of length $l$  it is easy to verify that
 \begin{equation}
\abs{S_{\mathbf{x}}(t)}^{2}=l+\sum_{u_{1}\neq0}\boldsymbol{C}(\mathbf{x})(u_{1})\omega_{\lambda}^{-\lambda u_{1}ft}.
 \end{equation}
 Therefore, the complex envelope for the sequence $\left(\psi(a_{t}^{\theta})\right)_{d}$ is given by
 \begin{equation}
 \label{an}
     \abs{S_{\left(\psi(a_{t}^{\theta})\right)_{d}}(t)}^{2}=m+\sum_{u_{1}\neq0}\boldsymbol{C}(\left(\psi(a_{t}^{\theta})\right)_{d})(u_{1})\omega_{\lambda}^{-\lambda u_{1}ft}.
 \end{equation}
 Now, we have 
 \begin{equation}
 \label{a4}
      \boldsymbol{C}[(\psi(a_{t}^{\theta}))_{d}](u_{1})=\sum_{\gamma=0}^{m-1-u_{1}}\omega_{\lambda}^{\left((a_{t}^{\theta})_{\gamma,d}-(a_{t}^{\theta})_{\delta,d}\right)},
 \end{equation}
 where,
  $\delta=\gamma+u_{1}$, $0\leq {\delta},{\gamma}\leq {p_{1}^{m_{1}}p_{2}^{m_{2}}\ldots p_{a}^{m_{a}}-1}$ and $0\leq \abs{u_{1}}\leq p_{1}^{m_{1}}p_{2}^{m_{2}}\ldots p_{a}^{m_{a}}-1$. Let
 \begin{equation}
     \begin{split}
&\gamma=\gamma_{1}+\gamma_{2}p_{1}^{m_{1}}+\gamma_{3}p_{1}^{m_{1}}p_{2}^{m_{2}}+\ldots+\gamma_{a}p_{1}^{m_{1}}p_{2}^{m_{2}}\ldots p_{a-1}^{m_{a-1}},\\ &\delta=\delta_{1}+\delta_{2}p_{1}^{m_{1}}+\delta_{3}p_{1}^{m_{1}}p_{2}^{m_{2}}+\ldots+\delta_{a}p_{1}^{m_{1}}p_{2}^{m_{2}}\ldots p_{a-1}^{m_{a-1}}.
     \end{split}
 \end{equation}
 Substracting $(a_{t}^{\theta})_{\gamma,d}$ and $(a_{t}^{\theta})_{\delta,d}$, we have 
 \begin{equation*}
 \label{maha}
 \begin{split}
    (a_{t}^{\theta})_{\gamma,d}-(a_{t}^{\theta})_{\delta,d}&=f_{\gamma,d}-f_{\delta,d}+\sum_{i=1}^{a}\frac{\lambda}{p_{i}}r_{i}\left(\gamma_{i,\pi_{i}(1)}-\delta_{i,\pi_{i}(1)}\right)\\
     &+\sum_{i=1}^{a}\frac{\lambda}{p_{i}}x_{i}\left(\gamma_{i,\pi_{i}(m_{i})}-\delta_{i,\pi_{i}(m_{i})}\right).
 \end{split}
 \end{equation*}
 Therefore,
 \begin{equation}
 \label{kalas}
     \begin{split}
      &\sum_{\gamma=0}^{m-1-u_{1}}\omega_{\lambda}^{\left((a_{t}^{\theta})_{\gamma,d}-(a_{t}^{\theta})_{\delta,d}\right)}
      =\sum_{\gamma=0}^{m-1-u_{1}}\mathcal{W}_{\gamma}\omega_{p_{i}}^{r_{i}\left(\gamma_{i,\pi_{i}(1)}-\delta_{i,\pi_{i}(1)}\right)},  
     \end{split}
 \end{equation}
 where $$\mathcal{W}_{\gamma}=\omega_{\lambda}^{\left(f_{\gamma,d}-f_{\delta,d}+\sum_{i=1}^{a}\frac{\lambda}{p_{i}}x_{i}\left(\gamma_{i,\pi_{i}(m_{i})}-\delta_{i,\pi_{i}(m_{i})}\right)+\sum_{i'=1,i'\neq i }^{a}\frac{\lambda}{p_{i'}}r_{i'}\left(\gamma_{i',\pi_{i'}(1)}-\delta_{i',\pi_{i'}(1)}\right)\right)}.$$
Consider the subset $\Theta_{1}$ of $\Theta$ as
 \begin{equation}
      \begin{split}
       &\Theta_{1}=\{\theta_{1}=(r_{1},r_{2},\ldots,r_{i-1},r_{i},r_{i+1},\ldots,r_{a},s_{1},s_{2},\ldots,s_{b}):0\leq r_{i}\leq p^{k{i}}_{i}-1\}.
      \end{split}
  \end{equation}
Taking the summation both side of (\ref{kalas}), we obtain
\begin{equation}
    \begin{split}
        &\sum_{\theta_{1}\in \Theta_{1} }\sum_{\gamma=0}^{m-1-u_{1}}\omega_{\lambda}^{\left((a_{t}^{\theta_{1}})_{\gamma,d}-(a_{t}^{\theta_{1}})_{\delta,d}\right)}=
        \sum_{r_{i}=0}^{p^{k_{i}}_{i}-1}\sum_{\gamma=0}^{m-1-u_{1}}\mathcal{W}_{\gamma}\omega_{p_{i}}^{r_{i}\left(\gamma_{i,\pi_{i}(1)}-\delta_{i,\pi_{i}(1)}\right)}\\
        &=\sum_{\gamma=0}^{m-1-u_{1}}\mathcal{W}_{\gamma}\sum_{r_{i}=0}^{p^{k_{i}}_{i}-1}\omega_{p_{i}}^{r_{i}\left(\gamma_{i,\pi_{i}(1)}-\delta_{i,\pi_{i}(1)}\right)}
    \end{split}
\end{equation}
 To derive the PMEPR bound, we consider the two cases as follows.
\begin{mycases}
\case{  $(\gamma_{i,\pi_{i}(1)}\neq\delta_{i,\pi_{i}(1)}$, for any $i\in\{1,2,\ldots,a\})$.}\\
 We have $$\sum_{r_{i}=0}^{p^{k_{i}}_{i}-1}\omega_{p_{i}}^{r_{i}\left(\gamma_{i,\pi_{i}(1)}-\delta_{i,\pi_{i}(1)}\right)}=0.$$ 
 Hence, for such $\gamma$, we have 
 \begin{equation}
 \label{a1}
     \sum_{\theta_{1}\in \Theta_{1} }\sum_{\gamma=0}^{m-1-u_{1}}\omega_{\lambda}^{\left((a_{t}^{\theta_{1}})_{\gamma,d}-(a_{t}^{\theta_{1}})_{\delta,d}\right)}=0.
 \end{equation}
 \case{ ($\gamma_{i\pi_{i}(1)}=\delta_{i\pi_{i}(1)}$ $\forall i\in\{1,2,\ldots,a\}$)}.\\
Just as in the case of proof of Thm \ref{KB},   \textit{Subcase (ii)}  of \textit{Case 1}, it can be shown that
 \begin{equation}
 \begin{split}
&\!\sum_{\theta_{1}\in\Theta_{1}}\!\!\!\!\omega^{\left((a_{t}^{\theta_{1}})_{\gamma,d
     }-(a_{t'}^{\theta_{1}})_{\delta,d}\right)}\!\!+\!\!\!\sum_{\theta_{1}\in \Theta_{1}}\!\!\!\!\!\sum_{k_{1}=1}^{\gamma_{1\pi_{1}(v-1)}}\!\!\!\!\omega_{\lambda}^{\left((a_{t}^{\theta_{1}})_{\gamma^{k_{1}},d}-(a_{t'}^{\theta_{1}})_{\delta^{k_{1}},d}\right)}\!\\
     &+\!\!\!\sum_{\theta_{1}\in \Theta_{1}}\sum_{k_{2}=1+\gamma_{1\pi_{1}(v-1)}}^{p_{1}-1}\!\!\!\!\!\!\!\!\!\!\!\!\omega_{\lambda}^{\left((a_{t}^{\theta_{1}})_{\gamma^{k_{2}},d}-(a_{t'}^{\theta_{1}})_{\delta^{k_{2}},d}\right)}=0.
 \end{split}
 \end{equation}
 \end{mycases}
 Combining the above two cases and (\ref{a4}),   we have 
 \begin{equation}
 \label{kasa}
     \sum_{\theta_{1}\in \Theta_{1}}\boldsymbol{C}\left[(\psi(a_{t}^{\theta_{1}}))_{d}\right](u_{1})=0.
 \end{equation}
 Taking summation on both sides of (\ref{an}), we have 
 \begin{equation}
 \begin{split}
      \sum_{\theta_{1}\in \Theta_{1}}\abs{S_{\left(\psi(a_{t}^{\theta})\right)_{d}}(t)}^{2}
      &=\sum_{\theta_{1}\in \Theta_{1}}m+\sum_{\theta_{1}\in \Theta_{1}}\sum_{u_{1}\neq0}\boldsymbol{C}\left(\left(\psi(a_{t}^{\theta_{1}})\right)_{d}\right)(u_{1})\omega_{\lambda}^{-\lambda u_{1}ft}\\
      &=\sum_{\theta_{1}\in \Theta_{1}}m+\sum_{u_{1}\neq0}\omega_{\lambda}^{-\lambda u_{1}ft}\sum_{\theta_{1}\in \Theta_{1}}\boldsymbol{C}\left(\left(\psi(a_{t}^{\theta_{1}})\right)_{d}\right)(u_{1})\\
      &=mp_{i}^{k_{i}}.
 \end{split}
 \end{equation}
Given that $\theta$ belongs to $\Theta_{1}$, it follows that
\begin{equation}
    \abs{S_{\left(\psi(a_{t}^{\theta})\right)_{d}}(t)}^{2}\leq \sum_{\theta_{1}\in \Theta_{1}}\abs{S_{\left(\psi(a_{t}^{\theta_{1}})\right)_{d}}(t)}^{2}=mp_{i}^{k_{i}}.
\end{equation}
 Hence the column sequence PMEPR is bound by max $\{p_{i}^{k_{i}}:1\leq i\leq a\}$. Similarly, the row sequence PMEPR is also bounded by max $\{q_{j}^{l_{j}}:1\leq j\leq b\}$.
 \end{proof}
\begin{remark}
    Since 2D-CCC is a collection of 2D-GCAS, the proposed work also produces 2D-GCAS with a set size. $\prod_{i=1}^{a}p^{k_{i}}_{i}\prod_{j=1}^{b}q^{l_{j}}_{j}$, row size 
    $\prod_{i=1}^{a}p^{m_{i}}_{i}$ and coloumn size $\prod_{j=1}^{b}q^{n_{j}}_{j}$.
\end{remark}
\begin{remark}
When $a=b=1$, $p_1=q_1= 2$, and $k_1=l_1=1$, \textit{Theorem} 1 provides a 2D-GCAS of array size $2^{m_1}\times 2^{n_1}$. In this case, our row and column sequence PMEPR is bounded by $2$ which is comparable to the 2D-GCAS of array size $2^{n}\times 2^{n}$ in \cite[Th. 16]{pai2022two}. However, in \cite[Th. 16]{pai2022two}, the PMEPRs of the row and column sequences are upper bound by $2^v$, where $v$ is greater than or equal to $1$ and it depends on the choice of permutation functions $\pi_s$. 
Therefore, the proposed 2D-CCC is a promising solution for reducing PMEPR in comparison to \cite{pai2022two}.
\end{remark}
\begin{example}
 In this example, we derive  2D-GCAS from 2D-CCC and look at the row and column sequence PMEPR. We have taken $a=1,b=1,p_{1}=2,q_1=2,m_1=3,k_{1}=1,l_{1}=1$, and $n_1=4$ in \textit{Theorem} 1, and get  2D-GCAS with array size  $8\times 16$. The row and column sequence PMEPRs are computed as 1.8093 and 2, respectively. In contrast, for a similar 2D-CCC in \cite{pai2022designing}, the PMEPR of the row sequence is 3.1237 and that of the column sequence is 2.3621, indicating a notable reduction in our presented approach. In Fig. \ref{fig:my_label}, we have shown the
corresponding plot of row and column sequence IAPR and PMEPR bound.
\end{example}
\begin{figure}[H]
    \centering
    \includegraphics[width=.7 \textwidth]{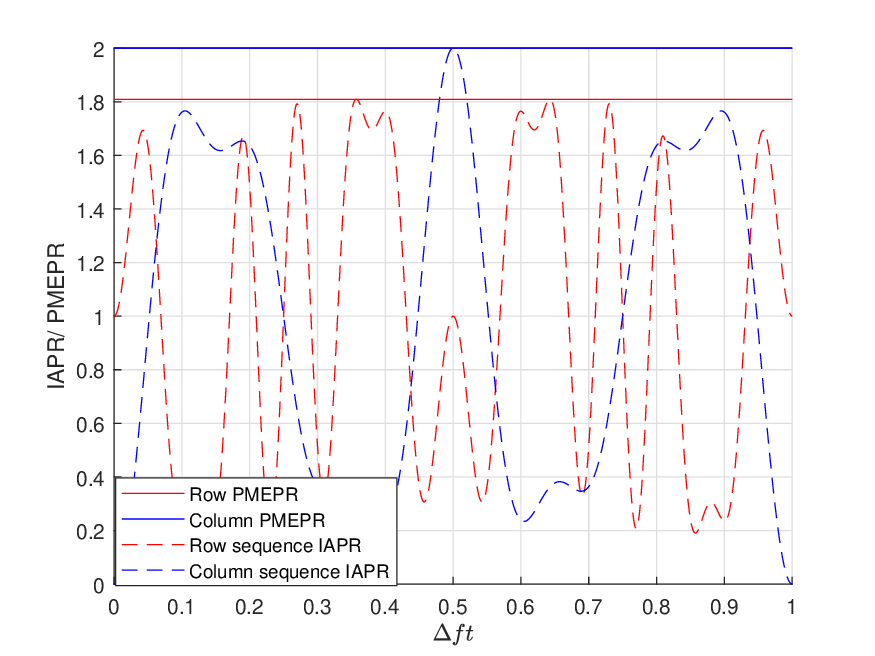}
    \caption{Row and column sequence
IAPR/PMEPR}
    \label{fig:my_label}
\end{figure}
\subsection{Generalization of the proposed work}
This section illustrates the generalizability of the work. 
 \par In \textit{Theorem} \ref{KB} , if we take $a=1,b=1,p_{1}=p,q_{1}=q$, then we have $p^{k_{1}}q^{k_{2}}$ number GCAS with array size $p^{m_{1}}\times q^{n_{1}}$ and set size $p^{k_{1}}q^{k_{2}}$, where $m_{1}>0$ and $n_{1}>0$. Therefore, this paper covers the GCAS construction given in \cite{liu2022constructions}.
\par In \textit{Theorem} \ref{KB} , by considering $m_{i}=2k_{i}$ and $n_{j}=2t_{j}$,  we have 2D $(MK,MK,M^2,K^2)-CCC$. In \cite{farkas2003two}, Farkas \textit{et al}. proposed 2D-CCC  with parameter $(MK,MK,M^2,K^2)-CCC$ and the method depends on the initial choice of 1D sequences. Therefore, the proposed work covers 2D-CCC given in \cite{farkas2003two} without depending on 1D sequences.
\par In \textit{Theorem} \ref{KB} , if we take  $m=1,b=k,q_{1}=q_{2}=\cdots=q_{k}=2$ and $n_{1}+n_{2}+\hdots+n_{k}=n$, then we can cover the CCC given in \cite{chen2008complete,rathinakumar2008complete}.
 \par In \textit{Theorem} \ref{KB} , if we consider $a=1,b=1,p_{1}=2,q_{1}=2,m_{1}=m,n_{1}=n$, then we have $2^{k}$ number of GCAS with array size $2^{m}\times 2^{n}$ and set size $2^{k}$ where $k=k_{1}+t_{1}$ and $m_{1},n_{1},k_{1},t_{1}\geq 1$. Therefore, the GCAS proposed in \cite{pai2022two,pai2021designing} can be constructed by the proposed method except for GCAP.
 \par In \textit{Theorem} \ref{KB}, if we consider $a=1,b=1,p_{1}=p,q_{1}=p,m_{1}=m,n_{1}=n$, then we have $p^{k}$ number of GCAS with array size $p^{m}\times p^{n}$ and set size $p^{k}$, where $k=k_{1}+t_{1}$ and $m_{1},n_{1},k_{1},t_{1}\geq 1$. Therefore, the GCAS proposed in \cite{liu2022constructions} can be constructed by the proposed method except, GCAP.
 In \textit{Theorem} \ref{KB}, by considering $p_{1}=2$, $k_{1}=k$, $m_{1}=kr$ and $n_{1}=ks$,  we have $2^{k}$ GCAS with array size $2^{kr}\times2^{ks}$ and set size $2^{k}$ where $k,r,s$ are positive integer.  Therefore, the GCAS proposed in \nocite{wang2021new} \cite[Th.7]{wang2021new12} can be constructed by the proposed method except $k=1$.
\par In \cite{sarkar2021multivariable}, Sarkar \textit{et al}. constructed 1D-CCC with all possible lengths and set size $\prod_{i=1}^{k}p_{i}^{k_{i}}$. In \textit{Theorem} \ref{KB}, if we take $m=1$, then our proposed construction covers the CCC in \cite{sarkar2021multivariable}. Since CCC is just a group of GCS, our work also makes GCS with any length of sequence.
\section{Comparison with Previous Works}\label{SEC5}
\par We also compare the proposed constructions with previous indirect constructions given in \cite{turcsany2003two,davidekova2013generalized,das2020two} and direct construction given in \cite{pai2021designing} and provide these in TABLE I. 

\begin{table}[h] \centering
\caption{Comparision of proposed 2D-CCC with existing works}
\label{tab:lcsgrammar}
	\scalebox{1}{
\begin{tabular}
{|c|c|c|c|c|}
\hline {Source}  & {\makecell{Set Size}} & {\makecell{Array Size}} & {Based on} & {Constraints} \\
  \hline \cite{turcsany2003two}  & $MK$ & $M^{2}\times K^{2}$ & \makecell{$1 \mathrm{D}-\mathrm{CCCs}$} & {$M, K \geq 2$} \\
   \hline \cite{davidekova2013generalized}  & $M^{2}$ & $2^{m}M\times2^{m}M$ & \makecell{$1 \mathrm{D}-\mathrm{CCCs}$} & {$m \geq 1$,$M\geq 2$} \\
\hline \cite{das2020two} &  $M$ & $M\times M$ & \makecell{BH matrices} & {$M \geq 2$}  \\
\hline \cite{pai2021designing} &  $2^{k}$ & $2^{n}\times2^{m}$ & \makecell{GBF} & {$m,n \geq k$,~$m,n,k\in\mathbb{Z}^{+}$}  \\
\hline Theorem \ref{KB}  & $x$ & $s_{1}\times s_{2}$ & \makecell{MVF} & \makecell{$x=\prod_{i=1}^{a}p^{k_{i}}_{i}\prod_{j=1}^{b}q^{l_{j}}_{j},$\\ 
$s_{1}=\prod_{i=1}^{a}p_{i}^{m_{i}}$,\\
$s_{2}=\prod_{j=1}^{b}q_{j}^{n_{j}}$}\\
\hline
\end{tabular}}
\end{table}

The constructions given in  \cite{turcsany2003two,davidekova2013generalized,das2020two}  depend on the initial choices of some sequence and matrices, whereas the construction given in \cite{pai2021designing} is obtained by GBF. The array size and the set size in \cite{pai2021designing} is of power-of-two. Our construction is based on MVF, which does not need any sequences or matrices initially. Also, it produces all possible array size and flexible set size.

\section{Conclusion}\label{SEC6}
This paper presents new results in the design of 2D-CCC, which is important for omnidirectional transmission in massive MIMO. We have introduced an MVF-based direct 2D-CCC construction. The set size of the proposed 2D-CCC is of the form $\prod_{i=1}^{a}p^{k_{i}}_{i}\prod_{j=1}^{b}q^{l_{j}}_{j}$ which is more flexible and has not been reported in the existing literature. The suggested 2D-CCC can handle all array sizes, which is also not reported.
 As a special case of the design, we have come up with 2D-GCAS with any array size and  flexible set size of the form $\prod_{i=1}^{a}p_{i}\prod_{j=1}^{b}q_{j}$. We have also successfully used the proposed 2D-CCC as precoding matrices in OP-based massive MIMO URA systems. We have shown that the BER performance of the proposed scheme. Moreover, the PMEPR of row and column sequences have been examined. The 2D-CCC given in \cite{farkas2003two}, 
GCP given in \cite[Th.3]{davis1999peak}, CCC given in\cite{chen2008complete,rathinakumar2008complete,sarkar2021multivariable}, and GCAS given in 
\cite[Th.1]{pai2022two} appears as a special case of the proposed construction.

\appendix
\section{Appendix: Proof of \textit{Theorem} \ref{KB}}\label{kn}
\begin{proof}
Let $\gamma,\delta \in \{0,1,\ldots,p_{1}^{m_{1}}p_{2}^{m_{2}}\ldots p_{a}^{m_{a}}-1\}$ and $\mu,\rho \in \{0,1,\ldots,q_{1}^{n_{1}}q_{2}^{n_{2}}\ldots q_{b}^{n_{b}}-1\}$, be such that 
\begin{equation*}
    \begin{split}
     &\gamma=\gamma_{1}+\displaystyle\sum_{i=2}^{a}\gamma_{i}\left(\prod_{i_{1}=1}^{i-1}p_{i_{1}}^{m_{i_{1}}}\right),
     \mu=u_{1}+\displaystyle\sum_{j=2}^{b}\mu_{j}\left(\prod_{j_{1}=1}^{j-1}q_{j_{1}}^{n_{j_{1}}}\right),\\
     &\delta=\delta_{1}+\displaystyle\sum_{i=2}^{a}\delta_{i}\left(\prod_{i_{1}=1}^{i-1}p_{i_{1}}^{m_{i_{1}}}\right),
     \rho=\rho_{1}+\displaystyle\sum_{j=2}^{b}\rho_{j}\left(\prod_{j_{1}=1}^{j-1}q_{j_{1}}^{n_{j_{1}}}\right),
    \end{split}
\end{equation*}
where, $0\leq \gamma_{i},\delta_{i}\leq p_{i}^{m_{i}}-1$ and $0\leq \mu_{j},\rho_{j}\leq q_{j}^{n_{j}}-1$.
  We choose $u_{1}$ and $u_{2}$ such that $\delta=\gamma+u_{1}$, $\rho=\mu+u_{2}$, where $0\leq |u_{1}|\leq p_{1}^{m_{1}}p_{2}^{m_{2}}\ldots p_{a}^{m_{a}}-1$ and $0\leq |u_{2}|< q_{1}^{n_{1}}q_{2}^{n_{2}}\ldots q_{b}^{n_{b}}-1$. The ACCS between $G_{t}$ and $G_{t'}$ can be expressed as
\begin{equation}
\label{g}
    \begin{split}
        \boldsymbol{C}(\psi(G_{t}),\psi(G_{t'}))(u_{1},u_{2})
        &=\sum_{\theta\in \Theta}\boldsymbol{C}\big(\psi(a_{t}^{\theta}),\psi(a_{t'}^{\theta})\big)(u_{1},u_{2})\\
        &=\sum_{\theta\in \Theta}\sum_{\gamma=0}^{m-1-u_{1}}\!\sum_{\mu=0}^{n-1-u_{2}}\omega_{\lambda}^{\left((a_{t}^{\theta})_{\gamma,\mu}-(a_{t'}^{\theta})_{\delta,\rho}\right)}\\
        &=\sum_{\gamma=0}^{m-1-u_{1}}\sum_{\mu=0}^{n-1-u_{2}}\sum_{\theta\in \Theta} \omega_{\lambda}^{\left((a_{t}^{\theta})_{\gamma,\mu}-(a_{t'}^{\theta})_{\delta,\rho}\right)}.
    \end{split}
\end{equation}
By subtracting $(a_{t}^{\theta})_{\gamma,\mu}$ and $(a_{t'}^{\theta})_{\delta,\rho}$, we get
\begin{equation}
\label{f}
    \begin{split}
    &(a_{t}^{\theta})_{\gamma,\mu}-(a_{t'}^{\theta})_{\delta,\rho}
    =f_{\gamma,\mu}-f_{\delta,\rho}+D_{\theta}+D_{\theta}'+E+E',
    \end{split}
\end{equation}
where
\begin{equation*}
\begin{aligned}
 \theta&=(r_{1},r_{2},\ldots,r_{a},s_{1},s_{2},\ldots,s_{b}),\\
    D_{\theta}&=\sum_{i=1}^{a}\frac{\lambda}{p_{i}}r_{i}(\gamma_{i,\pi_{i}(1)}- \delta_{i,\pi_{i}(1)}),\\
    D_{\theta}'&=\sum_{j=1}^{b}\frac{\lambda}{q_{j}}s_{j}(\mu_{j,\sigma_{j}(1)}- \rho_{j,\sigma_{j}(1)}),\\
    E&=\sum_{i=1}^{a}\frac{\lambda}{p_{i}}(\gamma_{i,\pi_{i}(m_{i})}x_{i}-\delta_{i,\pi_{i}(m_{i})}x'_{i})
    \end{aligned}
\end{equation*}
and 
\begin{equation*}
     E'=\sum_{j=1}^{b}\frac{\lambda}{q_{j}}(\mu_{j,\sigma_{j}(n_{j})}y_{i}-\rho_{j,\sigma_{j}(n_{j})}y'_{j}).
\end{equation*}
Therefore,
\begin{equation}
\label{z1}
    \sum_{\theta\in \Theta} w_{\lambda}^{\left((a_{t}^{\theta})_{\gamma,\mu}-(a_{t'}^{\theta})_{\delta,\rho}\right)}=\omega_{\lambda}^{\left(f_{\gamma,\mu}-f_{\delta,\rho}+E+E'\right)}FG,
\end{equation}
where
\begin{equation}
F=\sum_{\theta\in\Theta}\omega_{\lambda}^{D_{\theta}}=\left(\prod_{j=1}^{b}q^{l_{j}}_{j}\right)\prod_{i=1}^{a}\left(\sum_{r_{i}=0}^{p^{k_{i}}-1}\omega_{p_{i}}^{r_{{i}}(\gamma_{i,\pi_{i}(1)}- \delta_{i,\pi_{i}(1)})}\right)
\end{equation}
and
\begin{equation}
G=\sum_{\theta\in\Theta}\omega_{\lambda}^{D'_{\theta}}=\left(\prod_{i=1}^{a}p^{k_{i}}_{i}\right)\prod_{j=1}^{b}\left(\sum_{s_{j}=0}^{q^{l_{j}}_{j}-1}\omega_{q_{j}}^{s_{{j}}(\mu_{j,\sigma_{j}(1)}- \rho_{j,\sigma_{j}(1)})}\right).
\end{equation}
The rest of the proof is split into the following cases
\begin{mycases}
\case
          $(u_{2}>0).$\\  
          \vspace{-0.3cm}\subcase
           $\left(\text{For some}~j\in \{1,2,\ldots,b\}, \mu_{j,\sigma_{j}(1)}\neq \rho_{j,\sigma_{j}(1)}\right)$.\\
           Therefore, we have
\begin{equation}
    \sum_{s_{j}=0}^{q^{l_{j}}_{j}-1}\omega_{q_{j}}^{s_{{j}}(\mu_{j,\sigma_{j}(1)}- \rho_{j,\sigma_{j}(1)})}=0.
\end{equation}
As a result, $G=0$. Hence from (\ref{z1}), we have $$\sum_{\theta\in \Theta} w_{\lambda}^{\left((a_{t}^{\theta})_{\gamma,\mu}-(a_{t'}^{\theta})_{\delta,\rho}\right)}=0.$$ 
 
 \subcase \label{kan}$\left(\textit{For all}~j\in \{1,2,\ldots,b\}, \mu_{j,\sigma_{j}(1)}=\rho_{j,\sigma_{j}(1)}\right)$.\\
 Let us assume that $v$ is the smallest integer for which $\mu_{1,\sigma_{1}(v)}\neq \rho_{1,\sigma_{1}(v)}$.
 Let us consider $\mu^{k_{1}}$ and $\rho^{k_{1}}$ be the integers and corresponding  associated vectors are $\phi\left({\mu}^{k_{1}}\right)=(\boldsymbol{\mu}_{1}^{k_{1}},\boldsymbol{\mu}_{2},\ldots,\boldsymbol{\mu}_{b})$ and $\phi\left({\rho}^{k_{1}}\right)=(\boldsymbol{\rho}_{1}^{k_{1}},\boldsymbol{\rho}_{2},\ldots,\boldsymbol{\rho}_{b})$, respectively, where, $\boldsymbol{\mu_{1}^{k_{1}}}$ and  $\boldsymbol{\rho_{1}^{k_{1}}}$  are the vectors which
differ from $\boldsymbol{\mu_{1}}$ and $\boldsymbol{\rho_{1}}$ only at the position $\sigma_{1}(v-1)$,  i.e.,
\begin{equation}
\begin{split}
     &\boldsymbol\mu_{1}^{k_{1}}=(\mu_{1,1},\mu_{1,2},\ldots,\mu_{1,\sigma_{1}(v-1)-1},\mu_{1,\sigma_{1}(v-1)}-k_{1},\mu_{1,\sigma_{1}(v-1)+1},\ldots,\mu_{1,m_{1}}),\\
      &\boldsymbol\rho_{1}^{k_{1}}=(\rho_{1,1},\rho_{1,2},\ldots,\rho_{1,\sigma_{1}(v-1)-1},\rho_{1,\sigma_{1}(v-1)}-k_{1},\rho_{1,\sigma_{1}(v-1)+1},\ldots,\rho_{1,m_{1}}),
\end{split}
\end{equation}
where $1\leq k_{1}\leq \mu_{1,\sigma_{1}(v-1)}$.
Similarly, let us consider $\mu^{k_{2}}$ and $\rho^{k_{2}}$ be the integers and whose associated vectors are $\phi\left({\mu}^{k_{2}}\right)=(\boldsymbol{\mu}_{1}^{k_{2}},\boldsymbol{\mu}_{2},\ldots,\boldsymbol{\mu}_{b})$ and $\phi\left({\rho}^{k_{2}}\right)=(\boldsymbol{\rho}_{1}^{k_{2}},\boldsymbol{\rho}_{2},\ldots,\boldsymbol{\rho}_{b})$ respectively, where $\boldsymbol\mu_{1}^{k_{2}}$ and  $\boldsymbol\rho_{1}^{k_{2}}$  are the vectors differ from $\boldsymbol\mu_{1}$ and $\boldsymbol\rho_{1}$ only at the position $\sigma_{1}(v-1)$,  i.e.,
\begin{equation}
\begin{split}
     &\boldsymbol\mu_{1}^{k_{2}}=(\mu_{1,1},\mu_{1,2},\ldots,\mu_{1,\sigma_{1}(v-1)-1},\mu_{1,\sigma_{1}(v-1)}-k_{2}+q_{1},\mu_{1,\sigma_{1}(v-1)+1},\ldots,\mu_{1,m_{1}}),\\
      &\boldsymbol\rho_{1}^{k_{2}}=(\rho_{1,1},\rho_{1,2},\ldots,\rho_{1,\sigma_{1}(v-1)-1},,\rho_{1,\sigma_{1}(v-1)}-k_{2}+q_{1},\rho_{1,\sigma_{1}(v-1)+1},\ldots,\rho_{1,m_{1}}),
\end{split}
\end{equation}
where $\mu_{1,\sigma_{1}(v-1)}+1\leq k_{2}\leq q_{1}-1$. As $\delta=\gamma+u_{1}$ and $\rho=\mu+u_{2}$,
 therefore, $(\delta,\rho^{k _{i}})=(\gamma,\mu^{k _{i}})+(u_{1},u_{2})$ for $1\leq i \leq 2$. By subtracting $(a_{t}^{\theta})_{\gamma,\mu^{k_{i}}}$ and $(a_{t'}^{\theta})_{\gamma,\rho^{k_{i}}}$,  we get,
\begin{equation}
\label{h1}
    \begin{split}
    &(a_{t}^{\theta})_{\gamma,\mu^{k_{i}}}-(a_{t'}^{\theta})_{\delta,\rho^{k_{i}}}
    =f_{\gamma,\mu^{k_{i}}}-f_{\delta,\rho^{k_{i}}}+D_{\theta}+D'_{\theta}+E+E'.
    \end{split}
\end{equation}
By subtracting (\ref{f}) and (\ref{h1}), we get
\begin{equation}
\label{l}
    \begin{split}
         &(a_{t}^{\theta})_{\gamma,\mu^{k_{i}}}-(a_{t'}^{\theta})_{\delta,\rho^{k_{i}}}-((a_{t}^{\theta})_{\gamma,\mu}-(a_{t'}^{\theta})_{\delta,\rho})
        =\left(f_{\gamma,\mu^{k_{i}}}-f_{\delta,\rho^{k_{i}}}\right)-(f_{\gamma,\mu}-f_{\delta,\rho}).
    \end{split}
\end{equation}

By subtracting $f_{\gamma,\mu}$ with $f_{\gamma,\mu^{k_{1}}}$  and $f_{\delta,\rho}$ with $f_{\delta,\rho^{k_{1}}}$, we get
\begin{equation}
\label{i}
    \begin{split}
    &f_{\gamma,\mu}-f_{\gamma,\mu^{k_{1}}}=\frac{k_{1}\lambda}{q_{1}}(\mu_{1,\sigma_{1}(v-2)}+\mu_{1,\sigma_{1}(v)})+k_{1}c_{1,\sigma(v-1)},\\
     &f_{\delta,\rho}-f_{\delta,\rho^{k_{1}}}=\frac{k_{1}\lambda}{q_{1}}(\rho_{1,\sigma_{1}(v-2)}+\rho_{1,\sigma_{1}(v)})+k_{1}c_{1,\sigma(v-1)}.
    \end{split}
\end{equation}
 By subtracting $\left(f_{\gamma,\mu^{k_{1}}}-f_{\delta,\rho^{k_{1}}}\right)$ with $\left(f_{\gamma,\mu}-f_{\delta,\rho}\right)$,  we get \begin{equation}
\label{j}
\begin{split}
     \left(f_{\gamma,\mu^{k_{1}}}-f_{\delta,\rho^{k_{1}}}\right)-\left(f_{\gamma,\mu}-f_{\delta,\rho}\right)&=(f_{\gamma,\mu^{k_{1}}}-f_{\gamma,\mu})-(f_{\delta,\rho^{k_{1}}}-f_{\delta,\rho})\\
     &=\frac{k_{1}\lambda}{q_{1}}(\rho_{1,\sigma_{1}(v)}-\mu_{1,\sigma_{1}(v)}).
\end{split}
\end{equation}
From (\ref{l}) and (\ref{j}), we have
\begin{equation}
\label{das}
    \begin{split}
        &\sum_{k_{1}=1}^{\mu_{1,\sigma_{1}(v-1)}}\omega_{\lambda}^{\left((a_{t}^{\theta})_{\gamma,\mu^{k_{1}}}-(a_{t'}^{\theta})_{\delta,\rho^{k_{1}}}\right)-\left((a_{t}^{\theta})_{\gamma,\mu}-(a_{t'}^{\theta})_{\delta,\rho}\right)}\\
        &=\sum_{k_{1}=1}^{\mu_{1,\sigma_{1}(v-1)}}\omega_{q_{1}}^{k_{1}(\rho_{1,\sigma_{1}(v)}-\mu_{1,\sigma_{1}(v)})}.
    \end{split}
\end{equation}
Similarly, by subtracting $f_{\gamma,\mu}$ with $f_{\gamma,\mu^{k_{2}}}$  and $f_{\delta,\rho}$ with $f_{\delta,\rho^{k_{2}}}$, we get
\begin{equation*}
\label{m}
    \begin{split}
    &f_{\gamma,\mu}-f_{\gamma,\mu^{k_{2}}}=\frac{(k_{2}-q_{1})\lambda}{q_{1}}(\mu_{1,\sigma_{1}(v-2)}+\mu_{1,\sigma_{1}(v)})+{\kappa},\\
     &f_{\delta,\rho}-f_{\delta,\rho^{k_{2}}}=\frac{(k_{2}-q_{1})\lambda}{q_{1}}(\rho_{1,\sigma_{1}(v-2)}+\rho_{1,\sigma_{1}(v)})+{\kappa},
    \end{split}
\end{equation*}
where ${\kappa}=(k_{2}-q_{1})c_{1,\sigma(v-1)}$. Again, by subtracting $ \left(f_{\gamma,\mu^{k_{2}}}-f_{\delta,\rho^{k_{2}}}\right)$ with $\left(f_{\gamma,\mu}-f_{\delta,\rho}\right)$,
 we get \begin{equation}
\label{n}
\begin{split}
     \left(f_{\gamma,\mu^{k_{2}}}-f_{\delta,\rho^{k_{2}}}\right)-(f_{\gamma,\mu}-f_{\delta,\rho})&=\left(f_{\gamma,\mu^{k_{2}}}-f_{\gamma,\mu}\right)-\left(f_{\delta,\rho^{k_{2}}}-f_{\delta,\rho}\right)\\
     &=\frac{(k_{2}-q_{1})\lambda}{q_{1}}(\rho_{1,\sigma_{1}(v)}-\mu_{1,\sigma_{1}(v)}).
\end{split}
\end{equation}
From (\ref{l}) and (\ref{n}), we have,
\begin{equation}
\label{narottam}
    \begin{split}
        &\sum_{k_{2}=1+\mu_{1,\sigma_{1}(v-1)}}^{q_{1}-1}\omega_{\lambda}^{\left((a_{t}^{\theta})_{\gamma,\mu^{k_{2}}}-(a_{t'}^{\theta})_{\delta,\rho^{k_{2}}}\right)-\left((a_{t}^{\theta})_{\gamma,\mu}-(a_{t'}^{\theta})_{\delta,\rho}\right)}\\
         &=\sum_{k_{2}=1+\mu_{1,\sigma_{1}(v-1)}}^{q_{1}-1}\omega_{q_{1}}^{k_{2}\left(\rho_{1,\sigma_{1}(v)}-\mu_{1,\sigma_{1}(v)}\right)}.
    \end{split}
\end{equation}
Adding (\ref{narottam}) and (\ref{das}), we have
\begin{equation}
\label{35}
\begin{split}
    &\omega_{\lambda}^{\left((a_{t}^{\theta})_{\gamma,\mu}-(a_{t'}^{\theta})_{ \delta,\rho}\right)}+\!\!\!\sum_{k_{1}=1}^{\mu_{1,\sigma_{1}(v-1)}}\omega_{\lambda}^{\left((a_{t}^{\theta})_{\gamma,\mu^{k_{1}}}-(a_{t'}^{\theta})_{\delta,\rho^{k_{1}}}\right)}\\
&+\sum_{k_{2}=1+\mu_{1,\sigma_{1}(v-1)}}^{q_{1}-1}\!\!\!\!\!\!\!\!\!\!\!\omega_{\lambda}^{\left((a_{t}^{\theta})_{\gamma,\mu^{k_{2}}}-(a_{t'}^{\theta})_{\delta,\rho^{k_{2}}}\right)}=0.
\end{split}
\end{equation}
Since (\ref{35}) holds for all $\theta\in \Theta$, therefore,  taking summation on both side of (\ref{35}), we get
\begin{equation*}
\begin{split}
&\sum_{\theta\in\Theta}\!\!\omega_{\lambda}^{\left((a_{t}^{\theta})_{\gamma,\mu
     }-(a_{t'}^{\theta})_{\delta,\rho}\right)}\!+\!\!\sum_{\theta\in \Theta}\!\!\!\!\!\sum_{k_{1}=1}^{\mu_{1,\sigma_{1}(v-1)}}\omega_{\lambda}^{\left((a_{t}^{\theta})_{\gamma,\mu^{k_{1}}}-(a_{t'}^{\theta})_{\delta,\rho^{k_{1}}}\right)}\\
     &+\sum_{\theta\in \Theta}\sum_{k_{2}=1+\mu_{1,\sigma_{1}(v-1)}}^{q_{1}-1}\!\!\!\!\!\!\!\!\!\!\!\omega_{\lambda}^{\left((a_{t}^{\theta})_{\gamma,\mu^{k_{2}}}-(a_{t'}^{\theta})_{\delta,\rho^{k_{2}}}\right)}=0.
\end{split}
\end{equation*}
Combining the above two subcases, we get $\boldsymbol{C}(\psi(G_{t}),\psi(G_{t'}))(u_{1},u_{2})=0$.
\case ($u_{1}=0,u_{2}=0$).
\subcase ($t\neq t'$).\\
Without loss of generality, let us assume that $y_{j}\neq y'_{j}$. Now, we have 
\begin{equation}
\label{17}
    \begin{split}
         &\sum_{\gamma=0}^{m-1}\sum_{\mu=0}^{n-1} w_{\lambda}^{\left((a_{t}^{\theta})_{\gamma,\mu}-(a_{t'}^{\theta})_{\gamma,\mu}\right)}=AB.
    \end{split}
\end{equation}
where
\begin{equation}
    \begin{split}
        &A=\sum_{\gamma=0}^{m-1}\omega_{\lambda}^{\sum_{i=1}^{a}\frac{\lambda}{p_{i}}\gamma_{i,\pi_{i}(m_{i})}(x_{i}-x'_{i})},\\
        &B=\sum_{\mu=0}^{n-1}\omega_{\lambda}^{\sum_{j=1}^{b}\frac{\lambda}{q_{j}}\mu_{j,\sigma_{j}(n_{j})}(y_{j}-y'_{j})}.
    \end{split}
\end{equation}
We have $\mu=\mu_{1}+\displaystyle\sum_{j=2}^{b}\mu_{j}\left(\prod_{j_{1}=1}^{j-1}q_{j_{1}}^{n_{j_{1}}}\right)$ where $0\leq \mu_{j}\leq q_{j}^{n_{j}}-1$ and the vector representation of $\mu_{j}$ with base $q_{j}$ and length $n_{j}$ is $\boldsymbol{\mu}_{j}$ where $\boldsymbol{\mu}_{j}=(\mu_{j,1},\mu_{j,2},\ldots,\mu_{j,n_{j}})\in \mathbb{A}_{q_{j}}^{n_{j}}$.
We define a mapping $h:\mathbb{A}_{q_{j}^{n_{j}}}\rightarrow \mathbb{A}_{q_{j}}$ given by $h(\mu_{j})=\mu_{j,n_{j}}$. The mapping is well defined beacuse $\boldsymbol{\mu}_{j}$ is the vector representation of $\mu_{j}$, and $\mu_{j,n_{j}}$ is the $n_{j}$th element of the vector  $\boldsymbol{\mu}_{j}$. Now with the help of $h$ and the permutation $\sigma_{j}$ we define a composite mapping $h'=\sigma \circ h:\mathbb{A}_{q_{j}^{n_{j}}}\rightarrow \mathbb{A}_{q_{j}}$ given by $h'(\mu_{j})=\sigma \circ h(\mu_{j})=\sigma(h(\mu_{j}))=\sigma(\mu_{j,n_{j}})=\mu_{j,\sigma_{j}(n_{j})}$. The mapping $h'$ is well defined because both the mapping $h$ and $\sigma_{j}$ are well defined. Let $\mathcal{F}=\{\boldsymbol{\mu}=\left(\mu_{1},\mu_{2},\ldots,\mu_{b}\right):0\leq\mu_{j}\leq q_{j}^{n_{j}}-1\}$ then the sum $B$ can be expressed as
\begin{equation}
    B=\sum_{\left(\mu_{1},\mu_{2},\ldots,\mu_{b}\right)\in \mathcal{F}}\omega_{\lambda}^{\sum_{j=1}^{b}\frac{\lambda}{q_{j}}h'(\mu_{j})(y_{j}-y'_{j})},
\end{equation}
i.e.,
\begin{equation}
\label{18}
\begin{split}
     B&=\sum_{\mu_{1}=0}^{q_{1}^{n_{1}}-1}\sum_{\mu_{2}=0}^{q_{2}^{n_{2}}-1}\ldots\sum_{\mu_{b}=0}^{q_{b}^{n_{b}}-1} \omega_{\lambda}^{\sum_{j=1}^{b}\frac{\lambda}{q_{j}}h'(\mu_{j})(y_{j}-y'_{j})}
     =\prod_{j=1}^{b}\left(\sum_{\mu_{j}=0}^{q_{j}^{n_{j}}-1}\!\!\!\omega_{q_{i}}^{h'(\mu_{j})(y_j-y'_j)}\!\!\right).
\end{split}
\end{equation}
Now, we have
\begin{equation}
\label{19}
    \begin{split}
       \sum_{\mu_{j}=0}^{q_{j}^{n_{j}}-1}\!\!\!\omega_{q_{j}}^{h'(\mu_{j})(y_j-y'_j)}=q_{j}^{n_{j}-1}\!\!\!\!\!\!\!\sum_{\mu_{j,\sigma_{j}(n_{j})}=0}^{q_{j}-1}\!\!\!\omega_{q_{j}}^{\mu_{j,\sigma_{j}(n_{j})}(y_j-y'_j)}=0.
    \end{split}
\end{equation}
Therefore, from (\ref{19}), (\ref{18}) and  (\ref{17}), we have
$$\sum_{\gamma=0}^{m}\sum_{\mu=0}^{n} w_{\lambda}^{\left((a_{t}^{\theta})_{\gamma,\mu}-(a_{t'}^{\theta})_{\gamma,\mu}\right)}=0.$$
\subcase  ($t=t'$).\\
\begin{equation}
 \label{2.d}
    \begin{split}
        \boldsymbol{C}(G_{t},G_{t})(0,0)
      &=  \sum_{\theta\in \Theta}\sum_{\gamma=0}^{p_{1}^{m_1}p_{2}^{m_2}\ldots p_{a}^{m_{a}}-1}\sum_{\mu=0}^{q_{1}^{n_1}q_{2}^{n_2}\ldots q_{b}^{n_{b}}-1}w_{\lambda}^{\left((a_{t}^{\theta})_{\gamma,\mu}-(a_{t}^{\theta})_{\gamma,\mu}\right)}\\
      &=p_{1}^{m_{1}+k_{1}}p_{2}^{m_{2}+k_{2}}\ldots p_{a}^{m_{a}+k_{a}}q_{1}^{n_{1}+l_{1}}q_{2}^{n_{2}+l_{2}}\ldots q_{b}^{n_{b}+l_{b}}.
    \end{split}
\end{equation}
Combining the \textit{Subcase (i)} and \textit{Subcase (ii)}, we have  
\begin{equation}
 \begin{split}
  \boldsymbol{C}(G_{t},G_{t'})(0,0)
  =
\begin{cases}
\prod_{i=1}^{a}p_{i}^{m_{i}+k_{i}}\prod_{j=1}^{b}q_{j}^{n_{j}+l_{j}},t= t':\\
0, t\neq t'.
\end{cases}
\end{split}
\end{equation}

For all the other cases, the proof follows exactly the same as \textit{Case 1.} Hence,
combining all the above cases, we have
 \begin{equation}
 \begin{split}
  \boldsymbol{C}(G_{t},G_{t'})(u_{1},u_{2})
  =
\begin{cases}
\prod_{i=1}^{a}p_{i}^{m_{i}+k_{i}}\prod_{j=1}^{b}q_{j}^{n_{j}+l_{j}},
~~~ (u_{1},u_{2})=(0,0), t=t';\\
0, 
~~~ (u_{1},u_{2})\neq(0,0), t=t' ;\\
0,
~~~~ t\neq t'.
\end{cases}
\end{split}
\end{equation}

\end{mycases}

 Hence, the proof follows.
\end{proof}
\bibliography{sn-bibliography}


\end{document}